\newcommand{\E}{\mathbb{E}}
\newcommand{\I}{\mathbbm{I}}
\newcommand{\var}{\text{Var}}
\newcommand{\cov}{\text{Cov}}
\newcommand{\tvec}{\text{vec}}
\newcommand{\tr}{\text{tr}}
\newcommand{\normal}{\text{Normal}}
\newcommand{\IW}{\text{Inverse-Wishart}}
\newcommand{\IG}{\text{Inverse-Gamma}}
\newcommand{\SB}{\text{SB}}
\newcommand{\SNR}{\text{SNR}}
\newcommand{\naive}{{\text{naive}}}
\newcommand{\BDML}{{\text{BDML}}}
\newcommand{\FDML}{{\text{FDML}}}
\newcommand{\pto}{\overset{p}{\to}}
\newcommand{\dto}{\overset{d}{\to}}
\newcommand{\pstarto}{\overset{P^*}{\to}}
\newcommand{\sumi}{\sum_{i=1}^n}
\newcommand{\avgi}{\frac 1 n\sum_{i=1}^n}
\newcommand{\sqrti}{\frac 1 {\sqrt n}\sum_{i=1}^n}
\title{Bayesian Double Machine Learning for Causal Inference\thanks{We thank Otilia Boldea, Neal Shephard, and seminar participants at York and the Frank Diebold 65+35 Conference for helpful comments and suggestions. We thank Lasse von der Heydt for excellent research assistance.}}
 \author[1]{Francis J.\ DiTraglia\thanks{DiTraglia: \href{mailto:francis.ditraglia@economics.ox.ac.uk}{francis.ditraglia@economics.ox.ac.uk}}}
 \author[2]{Laura Liu\thanks{Liu: \href{mailto:laura.liu@pitt.edu}{laura.liu@pitt.edu}}}
 \affil[1]{\normalsize Department of Economics, University of Oxford}
 \affil[2]{\normalsize Department of Economics, University of Pittsburgh}
\date{\small First Version: November 6, 2024 \quad This Version: \today} 
\begin{document}

\clearpage
\maketitle
\thispagestyle{empty}


\begin{abstract}
  \singlespacing

This paper proposes a simple, novel, and fully-Bayesian approach for causal inference in partially linear models with high-dimensional control variables. Off-the-shelf machine learning methods can introduce biases in the causal parameter known as regularization-induced confounding. To address this, we propose a Bayesian Double Machine Learning (BDML) method, which modifies a standard Bayesian multivariate regression model and recovers the causal effect of interest from the reduced-form covariance matrix. Our BDML is related to the burgeoning frequentist literature on DML while addressing its limitations in finite-sample inference. Moreover, the BDML is based on a \textit{fully generative probability model} in the DML context, adhering to the likelihood principle. We show that in high dimensional setups the na\"ve estimator implicitly assumes no selection on observables---unlike our BDML. The BDML exhibits lower asymptotic bias and achieves asymptotic normality and semiparametric efficiency as established by a Bernstein–von Mises theorem, thereby ensuring robustness to misspecification. In simulations, our BDML achieves lower RMSE, better frequentist coverage, and shorter confidence interval width than alternatives from the literature, both Bayesian and frequentist.


  	\bigskip
        \noindent\textbf{Keywords:} Causal Inference, Regularization-induced Confounding, Bayesian Methods, Double Machine Learning

	\medskip
  \noindent\textbf{JEL Codes:} C11, C14, C55
\end{abstract}

\onehalfspacing

\newpage


\section{Introduction}

Two decades into the ``credibility revolution'' popularized by \emph{Mostly Harmless Econometrics}, much work in applied microeconomics continues to rely on relatively simple reduced-form models to untangle cause-and-effect from observational data.
In the simplest case, and the one we consider below, a researcher argues that a treatment $D$ is as good as randomly assigned after adjusting for a set of control variables $X$ and goes on to report the coefficient on $D$ from a regression of $Y$ on $(D,X)$ as the causal effect of interest.
Naturally this selection-on-observables approach is only as good as the choice of controls; researchers who rely on it typically devote much of their paper to defending the choice of $X$ in their application.

All else equal, a selection-on-observables strategy is more convincing when it includes more observables.
For this reason the dimension of $X$ can be very large relative to sample size.
Even if there are relatively few ``underlying'' control variables, fully adjusting for them may require us to include a large number of interactions and other transformations in our regression.
Applied researchers thus face a difficult trade-off: including more controls reduces the risk of bias but can lead to extremely imprecise estimates of the causal effect of interest.

In this paper, we provide a simple, fully Bayesian approach to navigating this bias-variance trade-off in a workhorse partially linear model: Bayesian Double Machine Learning (BDML).
While our methods are Bayesian, the resulting point and interval estimators are designed to appeal to a broad audience. 
For Bayesians, BDML offers a natural way to bring the full Bayesian toolbox to bear on high-dimensional causal inference problems in applied microeconomics.
For Frequentists, BDML provides superior finite-sample performance in practice--lower root mean-squared error and improved coverage of confidence intervals--while delivering the same attractive asymptotic guarantees as existing Frequentist approaches. 

Any attempt to tame the estimation uncertainty that arises from including many controls relies on \emph{regularization}, also known as shrinkage.
Rather than ``projecting out'' $X$ from $Y$ and $D$ using ordinary least squares (OLS), a regularized estimator \emph{constrains} the high-dimensional vector $\beta$ of coefficients on $X$ in the regression of $Y$ on $(D,X)$. 
Bayesian approaches do so by placing a prior on $\beta$; machine learning (ML) approaches do so either by adding a penalty term to the objective function to favor ``smaller'' values for the elements of $\beta$, as in LASSO or Ridge, or by penalizing model complexity through implicit algorithmic constraints, as in random forests and gradient boosting.

Whether Bayesian or Frequentist, implicit or explicit, any application of regularization to causal inference problems faces a key challenge: \emph{regularization-induced confounding} (RIC).\footnote{This term was coined by \cite{hahn_regularization_2018}.}
Unlike OLS, regularized estimators only partially adjust $D$ and $Y$ for the observed confounders $X$.
This induces a correlation between $D$ and the regularized regression residuals, biasing the estimated causal effect.
Granted, researchers must be prepared to accept at least some increase in bias in exchange for reduced estimation uncertainty.
The problem of RIC is that this bias can be \emph{extremely sensitive} to the precise way in which regularization is carried out, depending on the relationship between $D$ and $X$.
Indeed, na\"{i}ve approaches to regularized estimation can easily \emph{increase} the root mean-squared error (RMSE) of our causal effect estimator, adding a bias that far outweighs the reduction in variance.
Crucially, our BDML approach side-steps this problem completely: using a simple re-parameterization of the model, we avoid RIC while relying on standard methods for Bayesian linear regression.

From a subjective Bayesian point of view, RIC may seem puzzling. 
If we believe our prior and likelihood, Bayes optimality \emph{requires} us to base estimation and inference on the implied posterior.
In practice, however, priors are often chosen for pragmatic reasons, and as \cite{sims_robins-wasserman_2012} notes, such priors ``can unintentionally imply dogmatic beliefs about parameters of interest'' when expanded ``unthinkingly to high dimensions.''
\cite{linero_nonparametric_2023} shows that the causal inference problem we study here has this problem in spades. 
It is common, and convenient in practice, to place independent priors on $\gamma$ and $\beta$, the coefficients on $X$ in the ``propensity score'' and ``outcome'' regressions, respectively.
In low dimensions, this is innocuous. 
In high dimensions, on the other hand, it implies a \emph{dogmatic} prior belief that there is no selection bias.
If we truly believed this, of course, there would be no reason to adjust for $X$ in the first place!

To avoid prior dogmatism without the need to elicit a dependent prior for $(\gamma, \beta)$, BDML replaces the outcome regression of $Y$ on $(D, X)$ with a \emph{reduced form regression} of $Y$ on $X$.
This leaves us with a bivariate, seemingly unrelated regression of $(Y,D)$ on $X$.
Under the selection-on-observables assumption, the error terms of these two equations are correlated, and their covariance matrix contains all the information needed to carry out estimation and inference for the causal effect of $D$ on $Y$.
We show that placing independent priors on the coefficients of the seemingly unrelated regression equations runs no risk of prior dogmatism, regardless of the dimension of $X$.
Any standard prior on the error covariance matrix implies a plausible range for the \emph{a priori} selection bias.

We call our method BDML to highlight its connection to the Frequentist Double Machine Learning (FDML) approach that has become extremely popular in recent years \citep[see for example][]{chernozhukov_doubledebiased_2018}.
Like our proposal, FDML can be viewed as a means of avoiding RIC when carrying out causal inference in high dimensions.
In the partially linear regression model we study here, FDML likewise considers a bivariate reduced form regression of $(Y,D)$ on $X$.
After estimating the coefficients of these two regressions using an ML approach, FDML recovers the causal effect of interest using a further regression of the $Y$-residuals on the $D$-residuals.
This approach is computationally convenient and can be given a rigorous large-sample justification.
In practice, however, it amounts to treating the reduced form coefficients as if they were known--that is, FDML \emph{profiles} out the nuisance parameters by plugging in point estimates.
When the dimension of $X$ is large relative to sample size there may be considerable sampling uncertainty in the reduced form coefficients.
BDML explicitly acknowledges this by \emph{marginalizing} over all sources of uncertainty in the problem, leading to improved small-sample performance as measured by traditional Frequentist metrics, e.g.\ root mean-squared error (RMSE) and coverage probability. 

Crucially, the finite-sample advantages of BDML do not come at the expense of large sample guarantees.
To show this, we study a fixed coefficient asymptotic sequence in which the number of control regressors $p$ grows with sample size $n$, matching the approach of the FDML literature.
We consider three estimators: BDML, FDML, and a ``n\"{a}ive'' Bayesian alterantive that ignores RIC.
Under regularity conditions, we show that all three yield consistent estimators of the causal effect of interest.
When they are $\sqrt{n}$-consistency, we likewise show that all three have the same asymptotic variance. 
The asymptotic bias of BDML and FDML, however, is strictly smaller than that of the n\"{a}ive Bayesian approach: $p^2/n^2$ compared to $p/n$.
Moreover, the n\"{a}ive estimator is only $\sqrt{n}$-consistent if the number of controls is much smaller than sample size: $p/\sqrt{n} \rightarrow 0$.
BDML and FDML, in contrast, allow for a much larger number of controls, requiring only $p/(n^{3/4})\rightarrow 0$.
Asymptotically, BDML improves on the n\"{a}ive Bayesian approach while matching the performance of FDML. 

Continuing our asymptotic analysis, we build on results from \cite{walker_parametrization_2025} to establish a Bernstein-von Mises (BvM) theorem for BDML when $(\log n)^2 \prec p \prec \sqrt{n}$.\footnote{We use the symbol $a \prec b$ to indicate that $\lim_{n\rightarrow \infty} a/b \rightarrow 0$.}
This result further strengthens the Frequentist appeal of BDML, showing that posterior credible sets constructed using our approach are valid Frequentist confidence intervals.
Our BvM result further establishes the semi-parametric efficiency of BDML while imposing weaker assumptions on the prior than the recent Bayesian literature  
\citep[e.g.][]{luo_semiparametric_2023,breunig_semiparametric_2024}.
While we prove the BvM theorem under the assumption of Gaussian errors, these results are robust to mis-specification of the error distribution.

We conclude by comparing the finite-sample performance of BDML against a range of Bayesian and Frequentist competitors in a comprehensive simulation study and a number of empirical examples drawn from the recent literature in applied microeconomics.
Our results show that BDML provides the ``best of both worlds'', matching the asymptotic properties of Frequentist alternatives while delivering dramatically improved finite-sample performance. 
Compared existing Bayesian alternatives \citep[e.g.][]{hahn_bayesian_2020,linero_nonparametric_2023}, BDML maintains its impressive performance across a much wider range of data-generating processes (DGPs).
Since BDML is fully-Bayesian, it correctly accounts for all sources of sampling uncertainty, smoothly adapting to key features of the DGP.

This paper relates to a large recent literature on (Frequentist) double machine learning (FDML), also known as ``double-debiased machine learning''.
See \cite{ahrens_introduction_2025} for a comprehensive practical introduction and \cite{chernozhukov_doubledebiased_2018} for further theoretical details. 
FDML provides a general framework for estimating a low-dimensional target parameter using moment conditions that depend on a high-dimensional nuisance parameter. 
The na\"ive plug-in approach--constructing a preliminary ML estimate of the nuisance parameter and plugging it into the original moment condition--has two serious shortcomings.
The first is regularization bias, a more general term for RIC. 
The second is overfitting bias from using the same data twice--first to estimate the nuisance parameter and then to estimate the target parameter.
By \emph{orthogonalizing} the second-step moment condition with respect to the nuisance parameter, FDML inoculates against regularization bias to first order.
By using sample splitting, it likewise avoids overfitting bias.

Two key advantages of FDML are its simplicity and generality: whenever we can derive the appropriate orthogonalized moment condition, FDML justifies treating the estimated nuisance parameter as if it were known while providing theoretical guarantees across a wide range of first-step ML estimators.
These attractive properties have led applied researchers in economics to employ FDML in a range of empirical setting, especially the ``post-double selection'' (PDS) LASSO approach of \cite{belloni_inference_2014}.
Recent papers that apply FDML as their primary method in an observational setting include \cite{appl_dube_monopsony_2020} who estimate monopsony power in online labor markets, and \cite{appl_yang_bign_audit_2020} who estimate the ``Big N'' audit quality effect. 
While its primary motivation is to reduce bias in observational settings, FDML is also widely applied in experimental settings to flexibly incorporate covariates, either as a robustness check \citep{appl_alsan_representation_clintrials_2023, appl_armand_resource_curse_2020} or with the aim of improving estimation precision \citep{appl_hussam_psych_refugees_2022,appl_cotti_ecig_flavor_bans_2025,appl_delfino_pink_collar_2024, appl_chioda_skillstraining_uganda_2021}.
FDML has also been used to evaluate, and ultimately question, the plausibility of the selection-on-observables assumption, either by comparing against an instrumental variables (IV) benchmark \citep{appl_duflo_ghana_2017} or by creating a ``simulated'' observational dataset from a collection of randomized controlled trials \citep{appl_gordon_ms_advertising_measure_2023}. 

Although FDML is indisputably a valuable tool in applied economics, some recent work suggests that its finite sample performance can be sensitive to details of the first-step ML procedure in ways that may lead it to underperform in practice \citep{angrist_machine_2022, wuthrich_omitted_2023, bach_hyperparameter_2024}.
\cite{ahrens_introduction_2025} in particular stress that ``poorly tuned or ill-suited nuisance estimators can produce severely misleading point estimates''.
As such, they encourage researchers using FDML to ``carefully choose and tune their nuisance function estimators and consider a rich set of estimation approaches.''
While by no means an insurmountable challenge, the need for careful tuning makes FDML more complicated to implement in practice than its theoretical simplicity may suggest.
In our view, this challenge makes Bayesian alternatives to FDML particularly appealing.
By marginalizing over uncertainty in the nuisance parameter, BDML provides a fuller accounting for finite-sample estimation uncertainty.
Moreover, our method can provide robust performance across a wider range of data generating processes through the use of hierarchical priors, which adapt the degree of shrinkage to the data at hand. 

Our work also relates to a growing literature on Bayesian approaches to causal inference: see \cite{heckman_treatment_2014}, \cite{li_bayesian_2023}, and \cite{linero_how_2023} for broad discussions of this field. 
The studies closest to ours are \cite{hahn_regularization_2018,hahn_bayesian_2020} and \cite{linero_nonparametric_2023}, who consider the same partially linear model that we analyze below.\footnote{See \cite{belloni_inference_2014} for a frequentist analysis of the partially linear model.}
Compared to BDML, which marginalizes over all sources of uncertainty, these papers adopt a ``hybrid'' strategy: profiling over some parts of the problem and marginalizing over others.
This hybrid approach is computationally convenient and can be given a sound Bayesian justification.
Our simulation studies reveal, however, that the full marginalization provided by BDML is crucial for obtaining inferences that have good frequentist properties across a wide range of data generating properties. 
As mentioned above, our BvM results draw heavily on \cite{walker_parametrization_2025}.
Recent Bayesian work exploring issues related to those considered here, albeit in different models, includes \cite{saarela_bayesian_2016}, \cite{ray_semiparametric_2020}, \cite{antonelli_causal_2022},\cite{luo_semiparametric_2023}, \cite{breunig_semiparametric_2024} and \cite{breunig_double_2025}.

The remainder of this paper is organized as follows.
\Cref{sec:model} introduces our model and notation while \Cref{sec:RIC} discusses the problem of RIC in detail. 
\Cref{sec:BDML} presents our Bayesian Double Machine Learning approach and describes how it relates to both Bayesian and Frequentist alternatives, while \Cref{sec:asymptotics} presents a comparison of the asymptotic properties of BDML, FDML, and a na\"{i}ve Bayesian alternative. 
Finally, \Cref{sec:sim} presents a simulation study comparing the finite-sample performance of BDML against a range of Bayesian and Frequentist competitors.
Proofs appear in \Cref{append:proofs}.

\section{The Model}
\label{sec:model}

We consider the problem of estimating and carrying out inference for the parameter $\alpha$ in a partially linear structural model of the form 
\begin{equation}
  Y_i = \alpha D + g(X_i) + \varepsilon_i, \quad \mathbbm{E}[\varepsilon_i|D_i,X_i] = 0
\label{eq:plmY}
\end{equation}
where $Y_i$ is the outcome of interest, $D_i$ is the ``treatment'' or ``policy'' variable, and $X_i$ is a vector of control variables.\footnote{Our model and parameter of interest match those of \cite{belloni_inference_2014}.}
We do not assume that $D_i$ is binary; if it is, \eqref{eq:plmY} can be viewed as a special case of the ``selection on observables'' framework.
For reasons that will become clear below, we introduce an additional \emph{reduced form} model, namely
\begin{equation}
  D_i = m(X_i) + V_i, \quad \mathbbm{E}[V_i|X_i] = 0
\label{eq:plmD}
\end{equation}
where $m(X_i) \equiv \mathbbm{E}[D_i|X_i]$ so that \eqref{eq:plmD} holds by construction and we have
\begin{equation}
  \text{Cov}(\varepsilon_i, V_i) = \mathbbm{E}[\varepsilon V_i] = \mathbbm{E}[\varepsilon_i D_i] - \mathbbm{E}[\varepsilon_i m(X_i)] = 0
  \label{eq:CovEpsilonV}
\end{equation}
by iterated expectations and $\mathbbm{E}[\varepsilon|X_i,D_i]=0$, since $\varepsilon_i$ and $V_i$ are mean zero.

A leading example of \eqref{eq:plmY}--\eqref{eq:CovEpsilonV} assumes that $g(X_i) = X_i'\beta$ and $m(X_i)=X_i'\gamma$ where $\beta$ and $\gamma$ are $p$-vectors of unknown regression coefficients.
This simplification gives 
\begin{align}
  \label{eq:linearY}
Y_i &= \alpha D_i + X_i'\beta + \varepsilon_i, \quad \mathbbm{E}[\varepsilon_i|X_i, D_i] = 0\\ 
  \label{eq:linearD}
  D_i &= X_i'\gamma + V_i, \quad \mathbb{E}[V_i|X_i]=0
\end{align}
where we have 
\begin{equation}
  \text{Cov}(\varepsilon_i, V_i) = \text{Cov}(\varepsilon_i, D_i - X_i'\gamma) = \text{Cov}(\varepsilon_i, D_i) - \text{Cov}(\varepsilon_i, X_i')\gamma = 0.
  \label{eq:CovEpsilonVLinear}
\end{equation}

While the methods we develop in this paper are easily extended to the general case of \eqref{eq:plmY}--\eqref{eq:CovEpsilonV}, we focus on the special case of \eqref{eq:linearY}--\eqref{eq:CovEpsilonVLinear} to simplify the exposition.
We assume throughout that the dimension $p$ of $X_i$ is \emph{large relative to sample size} $n$.
If the dimension of $X_i$ exceeds the number of observations, then there is no unique ordinary least squares (OLS) estimator of $\alpha$ based on \eqref{eq:linearY}.  
Even if there are more observations than regressors, the OLS estimator of $\alpha$ is likely to be extremely noisy if $X_i$ if contains many ``relatively unimportant'' control variables--variables that, taken individually, have little predictive power for $D_i$ and $Y_i$.
This situation is arguably quite common in practice.
Applied researchers may know which controls it would \emph{suffice} to adjust for without being sure of which are most important.

\section{Regularization-Induced Confounding (RIC)}
\label{sec:RIC}
As is well-known, the OLS estimator for $\alpha$ corresponds to the Bayesian posterior mean under \eqref{eq:linearY} with a normal distribution for $\varepsilon_i$ and flat priors on $(\alpha, \beta)$.
Accordingly, in a situation where we expect OLS to perform poorly because the dimension of $X_i$ is large, a natural idea would be to ``shrink'' our estimates by introducing an informative prior on $\beta$. 
While there are many possibilities for the choice of shrinkage prior, the message of this section is that \emph{none of them} is likely to perform well in practice.
To illustrate this point, suppose we retain a flat prior for $\alpha$ but place independent normal priors on the elements of $\beta$:
\begin{equation}
  \beta^{(1)}\dots \beta^{(p)} \sim \text{iid Normal}(0, \sigma_\beta^2).
  \label{eq:ShrinkagePrior}
\end{equation}
Under the simplifying assumption that $\sigma^2_\varepsilon$ is known, the Bayes estimator for $(\alpha, \beta)$ under squared error loss has a closed-form solution:
\begin{equation}
  \begin{bmatrix}
    \widehat{\alpha}_\lambda \\
    \widehat{\beta}_\lambda
  \end{bmatrix} =
  \left[ 
    \begin{pmatrix}
    D'D & D'X \\
    X'D & X'X
\end{pmatrix} + 
\begin{pmatrix}
  0 & 0_p' \\
0_p & \lambda \mathbb{I}_p
\end{pmatrix}\right]^{-1}\begin{pmatrix}
  D'Y\\
  X'Y
\end{pmatrix}, \quad \lambda \equiv \frac{\sigma^2_\varepsilon}{\sigma^2_\beta}.
\label{eq:ridge}
\end{equation}
As seen from \eqref{eq:ridge}, the Bayes estimator in this setting corresponds to a modified version of Ridge Regression in which $\alpha$ is unpenalized.
As usual in Ridge Regression, we implicitly assume that $D_i$, $X_i$ and $Y_i$ have all been de-meaned so that the regression requires no intercept.
If \eqref{eq:ShrinkagePrior} represents genuine subjective researcher beliefs over $\beta$ then, from a Bayesian point of view, $\widehat{\alpha}_\lambda$ is the mean-squared error optimal estimator of $\alpha$.
In practice, however, shrinkage priors like \eqref{eq:ShrinkagePrior} are often used for pragmatic reasons: when $X_i$ is high-dimensional, it may be difficult if not impossible to elicit a fully-informative prior over $\beta$.
For the purposes of \emph{predicting} $Y_i$ this is usually innocuous, provided that $\sigma_\beta^2$ is chosen in a reasonable way.
But when the goal is to understand the \emph{causal effect} of $D_i$, the situation changes dramatically. 
To illustrate this point, we derive the (Frequentist) bias and variance of $\widehat{\alpha}_\lambda$ as follows under the simplifying assumption that $\varepsilon$ is homoskedastic.\footnote{\cite{hahn_regularization_2018} derive a special case of $\text{Bias}(\widehat{\alpha}_\lambda)$, but their expression contains a minor error.}

\begin{assump}
  \label{assump:ridgeMSE}
  $Y = \alpha D + X'\beta + \varepsilon$ where $\mathbbm{E}[\varepsilon|X, D] = 0$ and $\text{Var}(\varepsilon|X, D) = \sigma^2 \mathbbm{I}_p$.
\end{assump}

\begin{pro}
  \label{pro:ridgeMSE}
Let $\widehat{\rho}'\equiv (D'D)^{-1}D'X$, and $R \equiv \widehat{
\xi}'\widehat{\xi}$ where $\widehat{\xi} \equiv \left[\mathbbm{I}_p - D(D'D)^{-1}D'\right]X$.
Then \autoref{assump:ridgeMSE} implies
\begin{align}
  \label{eq:RidgeBias}
  \text{Bias}(\widehat{\alpha}_\lambda|X, D) &= \widehat{\rho}' \left[ \mathbbm{I}_p - (R + \lambda \mathbbm{I}_p)^{-1}R \right] \beta\\
  \label{eq:RidgeVariance}
  \text{Var}(\widehat{\alpha}_{\lambda}|X, D) &= \sigma^2_\varepsilon \left[ (D'D)^{-1} + \widehat{\rho}'(R + \lambda \mathbbm{I}_p)^{-1} R(R + \lambda \mathbbm{I}_p)^{-1} \widehat{\rho}\right]
\end{align}
where $\widehat{\alpha}_\lambda$ is as defined in \eqref{eq:ridge}.
\end{pro}

The $j$\textsuperscript{th} element of $\widehat{\rho}$ from \autoref{pro:ridgeMSE} is the slope coefficient from a regression of $X_i^{(j)}$ on $D_i$ without an intercept (both have been de-meaned) while the  $j$\textsuperscript{th} column of $\widehat{\xi}$ is the corresponding vector of residuals.
Setting $\lambda = 0$ corresponds to $\sigma_\beta^2  \rightarrow \infty$ and gives the OLS estimator of $\alpha$.
In this case the bias is zero and the variance simplifies to 
\[
  \text{Var}(\widehat{\alpha}_{\text{OLS}}) = \sigma^2_\varepsilon \left[ (D'D)^{-1} + \widehat{\rho}'R^{-1} \widehat{\rho}\right].
\]
If $\sigma^2_\beta$ is finite, then $\lambda \neq 0$ and $\widehat{\alpha}_\lambda$ is biased unless $\widehat{\rho}$ or $\beta$ equals zero.
Of course this in and of itself is not a convincing argument against shrinkage.  
In general, higher values of $\lambda$ imply lower values of $\text{Var}(\widehat{\alpha})$.
Even if $n > p$ so that OLS remains feasible, we should still prefer to set $\lambda > 0$ provided that the decrease in variance outweighs the increase in (squared) bias. 

The problem isn't the presence of bias but rather its dependence on other aspects of the model. 
We include covariates on the right-hand side of \eqref{eq:linearY} precisely because we believe that $\alpha$ would lack a causal interpretation without them.
This implies an \emph{a priori} belief that the control variables covary with both the outcome and treatment.
As seen from \eqref{eq:RidgeBias}, for a fixed value of $\lambda$ the bias of $\widehat{\alpha}_\lambda$ involves both $\beta$ and $\widehat{\rho}$. 
All else equal, larger values of $\widehat{\rho}^{(j)}\beta^{(j)}$ imply a higher bias.
Our beliefs about the magnitude of $\beta^{(j)}$ are already accounted for, at least in principle.
Since $\lambda = \sigma_\varepsilon^2 / \sigma_\beta^2$, a belief that $|\beta^{(j)}|$ will be large on average leads us to shrink \emph{less}, using a smaller value of $\lambda$ and keeping the bias in check. 
But $\text{Bias}(\widehat{\alpha}_\lambda)$ also depends crucially on $\widehat{\rho}$, and $\lambda$ fails to take this into account.\footnote{The stated priors on $(\alpha, \beta)$ along with the fixed known value of $\sigma_\varepsilon^2$ \emph{implicitly} encode beliefs about the likely magnitude of $\widehat{\rho}$, as we explain further below.}
If the treatment and controls are strongly correlated, even a reasonable choice for $\sigma_\beta^2$ can produce a highly biased estimator of $\alpha$, a phenomenon that \cite{hahn_regularization_2018} call \textbf{regularization-induced confounding (RIC)}. 
The name alludes to the fact that ``regularizing'' the OLS estimator by setting $\lambda > 0$ causes the residuals to violate the sample analogue of $\mathbbm{E}[D_i\varepsilon_i|X_i]=0$, the population moment condition that gives $\alpha$ a causal interpretation. 
While RIC is typically equated with unacceptably high bias, numerical experiments using \eqref{eq:RidgeBias}--\eqref{eq:RidgeVariance} confirm that RIC implies an unfavorable bias-variance \emph{trade-off}.
This is the sense in which we use the term. 

The trouble, it seems, is failing to account for our prior beliefs about $\gamma$, the reduced form coefficient from \eqref{eq:linearD} that reflects the dependence between $D_i$ and $X_i$.
A natural solution, then, would be to incorporate these beliefs into our model.
Doing so requires us to add an equation and work with a \emph{bivariate} regression including both \eqref{eq:linearY} and \eqref{eq:linearD}.
Staying as close as possible to the single-equation example from above, suppose we specify a normal likelihood for $(\varepsilon_i, V_i)$ and place independent normal shrinkage priors on the elements of $\beta$ and $\gamma$, all centered at zero.
In this model, an additional hyperparameter $\sigma_\gamma^2$ quantifies our prior beliefs about the strength of dependence between the treatment and control variables.
Unfortunately, $\sigma_\gamma^2$ will have \emph{no effect whatsoever} on our estimates or inferences for $\alpha$. 
Recall from \eqref{eq:CovEpsilonVLinear} above that the causal assumptions of our model require $\text{Cov}(\varepsilon_i, V_i) = 0$.
Adopting the shorthand $\theta$ to refer to all model parameters, this implies that the likelihood factorizes as follows
\begin{equation}
f(Y, D|X, \theta)= f(Y|D,X,\theta)f(D|X,\theta) = f(Y|D,X, \alpha, \beta, \sigma_\varepsilon^2) f(D|X,\gamma, \sigma_V^2).
  \label{eq:LikeFactorize}
\end{equation}
Notice that the first factor involves only $(\alpha, \beta, \sigma_\varepsilon^2)$ while the second involves only $(\gamma, \sigma_V^2)$.
Therefore, unless we specify a prior in which $\beta$ and $\gamma$
Treating $\sigma_\varepsilon^2$ as known, the bias and variance expressions from \eqref{eq:RidgeBias}--\eqref{eq:RidgeVariance} continue to hold and the problem of RIC remains. 
\cite{linero_nonparametric_2023} calls a causal model in which the likelihood factorizes as in \eqref{eq:LikeFactorize} and $\beta$ and $\gamma$ are \emph{a priori} independent a \textbf{Bayesian-ignorable} model, and points out that such models imply a highly dogmatic prior for the extent of selection bias when $p$ is large. 
This is another lens through which to view RIC: it arises from ``accidentally'' imposing a strong prior belief that there is little or no selection bias.
We explore this phenomenon in more detail in \autoref{sec:prior} below.


\section{Bayesian Double Machine Learning (BDML)}
\label{sec:BDML}

In principle, specifying an informative prior that incorporates dependence between $\beta$ from \eqref{eq:linearY} and $\gamma$ from \eqref{eq:linearD} would solve the problem of RIC introduced in \autoref{sec:RIC}.
Eliciting such a prior in practice, however, may be extremely challenging, especially when the number of control variables is large.
This motivates us to introduce an alternative approach, one that is inspired by ideas from the recent Frequentist literature on so-called ``double-debiased machine learning'', often referred to as ``double machine learning'' for short.
Substituting \eqref{eq:linearD} into \eqref{eq:linearY} gives the reduced form regression
\begin{equation}
  Y_i = X_i' (\alpha \gamma + \beta) + (\varepsilon_i + \alpha V_i) = X_i'\delta + U_i
  \label{eq:linearYrf}
\end{equation}
where $U_i \equiv \varepsilon_i + \alpha V_i$ and $\delta \equiv \alpha \gamma + \beta$.
Since since $\mathbbm{E}[\varepsilon_i|X_i,D_i]=0$ by \eqref{eq:linearY} and $X_i$ is uncorrelated with $V_i$ by \eqref{eq:linearD}, $X_i$ is uncorrelated with $U_i$. 
Assuming normal errors for simplicity, this gives the following \emph{bivariate} reduced form regression model
\begin{gather}
  \begin{aligned}
    Y_i &= X_i'\delta + U_i \\
    D_i &= X_i'\gamma + V_i
  \end{aligned} 
  \qquad
  \left.
  \begin{bmatrix}
    U_i \\ V_i
  \end{bmatrix} \right| X_i \sim \text{Normal}(\mathbf{0}, \Sigma).
  \label{eq:RFreg}
\end{gather}
Given that our goal is to estimate and carry out inference for $\alpha$, it may seem odd that we have eliminated this parameter from the regression equation for $Y$.
Crucially, however, the errors $(U_i, V_i)$ in \eqref{eq:RFreg} are \emph{correlated} whenever $\alpha\neq 0$.
This is what will allow us to learn the causal effect of interest.
By \eqref{eq:CovEpsilonVLinear} we have $\text{Cov}(\varepsilon,V) = 0$ and thus
\begin{align*}
  \text{Var}(U_i) &=  \text{Var}(\varepsilon_i + \alpha V_i) = \text{Var}(U_i) + \alpha^2 \text{Var}(V_i) = \sigma_\varepsilon^2 + \alpha^2 \sigma_V^2 \\
  \text{Cov}(U_i,V_i) &=  \text{Cov}(\varepsilon_i + \alpha V_i, V_i) = \alpha \sigma_V^2
\end{align*}
by the definition of $U$.
Therefore, 
\begin{equation}
  \Sigma \equiv \text{Var}\begin{bmatrix}
    U_i\\ V_i
  \end{bmatrix} = \begin{bmatrix}
    \sigma_U^2 & \sigma_{UV} \\
    \sigma_{UV} & \sigma_V^2
  \end{bmatrix} =
  \begin{bmatrix}
    \sigma_{\varepsilon}^2 + \alpha^2 \sigma_V^2 & \alpha \sigma_V^2 \\
    \alpha \sigma_V^2 & \sigma_V^2
  \end{bmatrix}.
  \label{eq:Sigma}
\end{equation}
Notice that knowledge of $\Sigma$ immediately implies knowledge of $\alpha$ via the simple transformation 
\begin{equation}
  \frac{\text{Cov}(U,V)}{\text{Var}(V)} = \frac{\sigma_{UV}}{\sigma_V^2} =  \frac{\alpha \sigma_V^2}{\sigma_V^2} = \alpha.
  \label{eq:alphaFromSigma}
\end{equation}
The relationship in \eqref{eq:alphaFromSigma} suggests the following approach to estimation and inference for $\alpha$.

\begin{alg}[Bayesian Double Machine Learning]
  \label{alg:linearBDML}
  \mbox{}
\begin{enumerate}
  \item Place priors on $(\delta, \gamma, \Sigma)$ in the reduced form model given by \eqref{eq:RFreg}.
  \item Sample from the joint posterior $(\delta, \gamma, \Sigma)|(X,D,Y)$ where $\Sigma$ is defined in \eqref{eq:Sigma}.
    \item Transform the posterior draws for $\Sigma$ to obtain a posterior for $\alpha = \sigma_{UV}/\sigma_V^2$.
\end{enumerate}
\end{alg}
The procedure given in \autoref{alg:linearBDML} is simple, flexible, and fully Bayesian. 
At the same time, it allows us to avoid the problem of inadvertently specifying a dogmatic prior on selection bias, pointed out by \cite{linero_nonparametric_2023} and explained in \autoref{sec:RIC} above.
First, because we work with a pair of reduced form regressions rather than a structural regression for $Y_i$ and a reduced form regression for $D_i$, our likelihood does not factorize: $(U_i,V_i)$ are correlated whereas $(\varepsilon_i,V_i)$ are not.
Second, because $\delta = (\alpha \gamma + \beta)$, placing independent priors on the reduced form regression coefficients $\delta$ and $\gamma$ does \emph{not} imply independent priors for $\beta$ and $\gamma$.  
Hence, \eqref{eq:RFreg} in general satisfies \emph{neither} of the two conditions required for Bayesian-ignorability introduced by \cite{linero_nonparametric_2023}.
Naturally, the properties of point and interval estimates based on \autoref{alg:linearBDML} will depend on the specific choice of priors for $(\delta, \gamma,  \Sigma)$.
We explore this point in detail below and and compare the theoretical and practical performance of different priors.
The key insight of our approach, however, is that re-writing the model from \eqref{eq:linearY}--\eqref{eq:CovEpsilonVLinear} in the form given by \eqref{eq:RFreg} allows us to side-step the problem of RIC using a standard Bayesian multivariate regression model with conditionally conjugate priors.
As far as we are aware, this observation is new to the literature. 

Because the estimator $\widehat{\alpha}_\lambda$ from \eqref{eq:ridge} can be viewed from a Frequentist perspective as an instance of Ridge Regression, \eqref{eq:RidgeBias}--\eqref{eq:RidgeVariance} imply that RIC is just as much of a challenge for frequentists who apply machine learning methods to high-dimensional causal inference problems as it is for Bayesians.
The increasing recognition of this problem over the past decade has led to the development of a framework called (frequentist) double machine learning (FDML).
See \cite{chernozhukov_doubledebiased_2018} for an overview.
The simplest FDML approach to estimating $\alpha$ in \eqref{eq:linearY} proceeds by running three regressions.
First, let $\hat{\delta}_{\text{ML}}$ be the estimated coefficient vector from some unspecified ``machine learning'' (i.e.\ regularized) regression of $Y_i$ on $X_i$, e.g.\ ridge regression, LASSO, random forests, etc.\
Second, let $\hat{\gamma}_{\text{ML}}$ be the estimated coefficient vector from some unspecified regularized regression of $D_i$ on $X_i$.
The FDML estimator is then obtained by regressing the residuals from the $Y_i$ on $X_i$ regression on the residuals from the $D$ on $X$ regression, in other words
\begin{equation}
  \widehat{\alpha}_{\text{FDML}} = \frac{\sum_{i=1}^n (Y_i - X_i'\widehat{\delta}_\text{ML} )(D_i - X_i'\widehat{\gamma}_\text{ML} )}{\sum_{i=1}^n (D_i - X_i'\widehat{\gamma}_\text{ML})^2}.
  \label{eq:FDML}
\end{equation}
Like \eqref{eq:RFreg} from above, \eqref{eq:FDML} involves reduced form regressions of $Y$ and $D$ on $X_i$. 
Because of this similarity, we refer to the approach from \autoref{alg:linearBDML} as \textbf{Bayesian Double Machine Learning (BDML)}.
From a Bayesian perspective, BDML provides a way to guard against RIC while using a fully generative model and conducting inference in a manner that respects the likelihood principle.
This gives researchers access to the full arsenal of Bayesian tools, from hierarchical modeling to posterior predictive checks. 
But even for researchers who are primarily concerned with frequentist performance, BDML offers two key advantages. 
First, it allows practitioners to incorporate subject matter expertise into estimation through the choice of prior, potentially yielding sizeable efficiency gains in small samples.
To show how this can be achieved in practice, we propose a simple and effective way of incorporating researcher beliefs concerning the R-squared of the reduced form regressions, building on the so-called ``R2D2'' prior of \citep{zhang_bayesian_2022}.
Second, rather than simply ``plugging in'' point estimates of $\delta$ and $\gamma$, as in FDML, our BDML approach marginalizes over these high-dimensional parameters, correctly accounting for all sources of estimation uncertainty for $\alpha$.
As such, BDML may provide more accurate frequentist inference while avoiding the need for cumbersome and computationally-expensive ``cross-fitting'' procedures, as are commonly employed in applications of FDML.

While ours is not the only Bayesian proposal for avoiding RIC in a model like \eqref{eq:RidgeVariance}, BDML differs in important ways from existing methods.  
The two closest to our approach are the methods of \cite{hahn_regularization_2018,hahn_bayesian_2020} on the one hand and \cite{linero_nonparametric_2023}, on the other hand.
Both begin with a preliminary Bayesian linear regression of $D_i$ on $X_i$ to yield a Bayes estimator of $\gamma$ that is then substituted into a second Bayesian regression.\footnote{While \cite{hahn_regularization_2018} takes a more involved joint estimation approach, their followup paper \cite{hahn_bayesian_2020} advocates the simpler two-step approach that we describe here.}
The approaches differ in the precise regression model that they estimate in the second step.
\cite{hahn_regularization_2018} rely on the same substitution as our \eqref{eq:linearYrf} but, rather than absorbing $\alpha V_i$ into a reduced form error term, replace it with its point estimate $\widehat{V}_i$ based on a first-step regression.
\begin{alg}[HCPH]
  \label{alg:hahn} 
Obtain the marginal posterior for $\alpha$ from the linear regression 
\[
  Y_i = \alpha \widehat{V}_i + X_i'\delta_2 + \nu_i, \quad
  \widehat{V}_i \equiv (D_i - X_i' \widehat{\gamma})
\]
where $\widehat{\gamma}$ is a first-step Bayes-estimator of $\gamma$ from \eqref{eq:linearD}. 
\end{alg}
Note that the parameter $\delta_2=\delta+\alpha \widehat{\gamma} $ from \autoref{alg:hahn} does \emph{not} in general coincide with our $\delta$ from \eqref{eq:linearYrf}.
\cite{linero_nonparametric_2023} generalizes the second-stage regression from \cite{hahn_regularization_2018}. 
Defining $\widehat{D}_i \equiv  X_i'\widehat{\gamma}$, we can re-express the regression from \autoref{alg:hahn} as
\[
  Y_i = \alpha D_i - \alpha \widehat{D}_i + X_i'\delta_2 + \nu_i. 
\]
\cite{linero_nonparametric_2023} estimates the parameters of this regression \emph{without the restriction} that the coefficient on $\widehat{D}_i$ equals the negative of the coefficient on $\widehat{D}_i$.

\begin{alg}[Linero]
  \label{alg:linero} 
Obtain the marginal posterior for $\alpha$ from the linear regression
\[
  Y_i = \alpha D_i + \kappa \widehat{D}_i + X_i'\delta_3 + \nu_i, \quad \widehat{D}_i \equiv X_i' \widehat{\gamma}
\]
where $\widehat{\gamma}$ is a first-step Bayes-estimator of $\gamma$ from \eqref{eq:linearD}. 
\end{alg}

Notice that the regression from \autoref{alg:linero} has an identification problem: $\widehat{D_i} = X_i\widehat{\gamma}$ is \emph{perfectly collinear} with $X_i$.
We write $\delta_3$ as the vector of regression coefficients for $X_i$ to emphasize that, when estimated \emph{without} the restriction that $\kappa = -\alpha$, this procedure will not coincide with \autoref{alg:hahn}.

Compared to the procedures in Algorithms \ref{alg:hahn}--\ref{alg:linero}, our BDML approach differs in two important ways. 
First, our inferences for $\alpha$ incorporate posterior uncertainty over \emph{both} $\gamma$ and $\delta$.
\cite{hahn_bayesian_2020} argue that, since Bayesian inference in a regression model \emph{conditions} on the realizations of the regressors, a two-step approach can be viewed as fully-Bayesian: any Bayes-estimator of $\gamma$ from the first-step regression is necessarily a measurable function of $(D, X)$, so plugging this estimate into a second-step regression does not violate the likelihood principle.
We agree with this argument, as far as it goes. 
But there are \emph{countless} approaches that adhere to the likelihood principle, all of which yield different posteriors. 
As we show in our simulations below, accounting for uncertainty in $\widehat{\gamma}$ is crucial for obtaining inferences that have good frequentist properties in addition to a sound Bayesian justification.
Second, by residualizing \emph{both} $D_i$ and $Y_i$ with respect to $X_i$, BDML achieves robustness properties similar to those of frequentist double machine learning in a fully-Bayesian model.
In contrast \cite{hahn_regularization_2018} residualizes only $D_i$.
Alluding to similar robustness concerns raised by \cite{zigler_central_2016},  \cite{linero_nonparametric_2023} suggests that his two-step approach could give added robustness when either \eqref{eq:linearY} or \eqref{eq:linearD} may be misspecified by eliminating ``feedback'' from the estimation problem.
As we show below, however, our approach performs well in practice even in the simulation designs from \cite{linero_nonparametric_2023}.
This is because our approach, like FDML, is ``double de-biased'' where as Algorithms \ref{alg:hahn}--\ref{alg:linero} are only ``single de-biased'', a point we expand upon below.


\section{Asymptotic Properties}
\label{sec:asymptotics}

\subsection{Assumptions}

We now explore the large-sample properties of the BDML approach from \autoref{alg:linearBDML} and contrast them with those of the frequentist alternative, FDML, as well as those of a ``na\"{i}ve'' Bayesian approach analogous to the Ridge Regression estimator from \eqref{eq:ridge}.
We first introduce some notation.
Let $X$ be the $(n\times p)$ matrix of control regressors from above and define the $(n\times 2)$ matrix of ``outcomes'' $W$, the $(p\times 2)$ matrix of coefficients $B$, and the $(n \times 2)$ matrix of errors $E$ according to
\begin{equation}
  W \equiv \begin{bmatrix}
    Y & D
  \end{bmatrix}, \quad
  B \equiv \begin{bmatrix}
    \delta & \gamma
  \end{bmatrix}, \quad
  E \equiv \begin{bmatrix}
    U & V
  \end{bmatrix}
  \label{eq:RFmatrixDefn}
\end{equation}
where $E_i' \equiv (U_i, V_i)$ denotes a specified row of $E$. 
Then, we can express the reduced form regression model from \eqref{eq:RFreg} in matrix form as
\begin{equation}
  W = X B + E,\quad E_i|X \sim \text{ iid Normal}_2(0, \Sigma).
\label{eq:RFmatrixReg}
\end{equation}
We consider an asymptotic sequence in which the number of control regressors $p$ grows with sample size $n$ and the true parameter values are \emph{fixed}.
We write these fixed true values as $\Sigma^*$ and $B^* \equiv \begin{bmatrix} \delta^* & \gamma^* \end{bmatrix}$ to distinguish them from the random variables $\Sigma$ and $B$ that represent prior (or posterior) uncertainty in a Bayesian model.
Analogously, when we have occasion to refer to the true values of the structural parameters from \eqref{eq:linearY}, we write them as $(\alpha^*,\beta^*\sigma_\varepsilon^*)$.

Both for simplicity and to underscore the fact that BDML does not require the researcher to specify unusual or elaborate priors, we work with the classic conditionally conjugate prior for seemingly unrelated regression (SUR) models.\footnote{See \cite{zellner_introduction_1971} section 8.5 for a textbook treatment.}
In particular, we combine \eqref{eq:RFmatrixReg} with the prior $\pi(\Sigma,B) = \pi(\Sigma)\pi(\delta)\pi(\gamma)$ where
\begin{equation}
    \Sigma \sim \text{Inverse-Wishart}(\nu_0, \Sigma_0), \quad
    \delta \sim \text{Normal}_p(0,\mathbbm{I}_p/\tau_\delta), \quad
  \gamma \sim \text{Normal}_p(0, \mathbbm{I}_p/ \tau_\gamma)
  \label{eq:SURprior}
\end{equation}
and $(\tau_\delta, \tau_\gamma)$ denote the prior precisions for the reduced-form parameters $\delta$ and $\gamma$.\footnote{As we explain further below, our use of a normal likelihood for $E_i$ is innocuous.}
For purposes of comparison, we also consider a na\"ive Bayesian estimator given by the Ridge Regression estimator from \eqref{eq:ridge}.
This approximation treats the error variance $\sigma_\varepsilon^2$ as a known constant and imposes the prior $\beta \sim \text{Normal}_p(0, \mathbbm{I}_p/\tau_\beta)$, where $\lambda = \tau_\beta \sigma_\varepsilon^2$ 


Equation \ref{eq:SURprior} describes the Bayesian \emph{model} we study.
We now describe the assumptions that we place on the true data-generating process.
For ease of notation, we state these in terms of the \text{reduced form} model.
Combining these assumptions with \eqref{eq:linearY}--\eqref{eq:linearD} implies closely related assumptions for the structural model.
The causal structure, in turn, gives a meaningful causal interpretation to $\alpha = \sigma_{UV}/\sigma_V^2$.
We first assume that the researcher observes a sample of $n$ iid observations from the reduced form model.

\begin{assump}[Random Sampling]
  \label{assump:dgp}
  The random variables $\left\{(X_i, E_i)\right\}_{i=1}^n$ are iid across $i$ and $W_i = (Y_i, D_i)$ is generated according to \eqref{eq:RFmatrixReg} with true parameters $B^*$ and $\Sigma^*$.
\end{assump}



While \autoref{assump:dgp} imposes normality of the errors $E_i$, as we explain further below this assumption turns out to be innocuous.
Our next assumption concerns the true reduced form parameters $\Sigma^*$ and $(\delta^*, \gamma^*)$.
As mentioned above, we consider an asymptotic sequence in which the number of control regressors, $p$, grows with sample size.
This implies that the dimensions of $\delta$ and $\gamma$ likewise grow.
To accommodate this, we consider $\delta^*$ and $\gamma^*$ to be square-summable infinite \emph{sequences} of parameters, i.e.\ elements of the sequence space $\ell^2(\mathbbm{N})$. 
In finite samples only the first $p$ elements of each are ``active'', but as we obtain more control regressors, the number of regression coefficients increases to match.
Our assumption ensures that these coefficients do not ``explode'' as the number of regressors increases.
Intuitively, each additional control matters ``less on average'' the more that we include in our model. 
Because the dimension of $\Sigma^*$ does not change with sample size, we need only assume that this variance-covariance matrix is well-behaved.


\begin{assump}[True Reduced Form Parameters]
  \label{assump:trueRF}
  \mbox{}
  \begin{enumerate}[(i)]
    \item  $\Sigma^* \equiv \text{Var}(E_i)$ is finite and positive definite.
    \item $\|\delta^*\|_2$ and $\|\gamma^*\|_2$ are bounded as $p\rightarrow \infty$ 
  \end{enumerate}
\end{assump}

Our next assumption concerns the control regressors $X_i$.
For simplicity, and without loss of generality, we assume that $X_i$ has been centered around its mean.
To make it easier to compare our BDML procedure against its frequentist cousin, we further assume that there are more observations than controls: $p < n$. 
The remaining conditions we impose on $X_i$ restrict the variances and covariances of the control regressors as $p$ grows.
To state them, let $\Sigma_X\equiv \text{Var}(X_i)$ and $\lambda_j(\Sigma_X)$ be the $j$-th eigenvalue of $\Sigma_X$, where these values are arranged in descending order.
Further let $F_p(t) \equiv \frac{1}{p} \sum_{j=1}^{p} \mathbbm{1}\left\{\lambda_j(\Sigma_X) \leq t\right\}$ be the empirical spectral distribution of $\Sigma_X$.

\begin{assump}[Control Regressors]
  \mbox{}
  \label{assump:controls}
  \begin{enumerate}[(i)]
    \item $\mathbbm{E}(X_i) = 0$, without loss of generality, and $p < n$.
    \item $\Sigma_X \equiv \text{Var}(X_i)$ is finite and positive definite for any $p$
    \item The empirical spectral distribution $F_p(t) \equiv \frac{1}{p} \sum_{j=1}^{p} \mathbbm{1}\left\{\lambda_j(\Sigma_X) \leq t\right\}$ converges to a limit distribution $F(t)$ supported on $(0,\infty)$ for all continuity points of $F$ as $p\rightarrow \infty$.
    \item $\frac{1}{p}\sum_{j=1}^p \left[\lambda_j(\Sigma_X)\right]^m \rightarrow \mu^{(m)} < \infty$ as $p \rightarrow \infty$ for $m = -1, 2, 2$ and $2 + \eta$ for some $\eta > 0$.
  \end{enumerate}
\end{assump}

Parts (iii) and (iv) of \autoref{assump:controls} ensure that the spectrum of $\Sigma_X$ remains ``well-behaved'' as the dimension of $X_i$ grows.
The final ingredient of our asymptotic framework is an assumption about the relative rates of $n$, $p$, and the prior precisions. 
To state it we use the notation $a_n \asymp b_n$ to indicate that two sequences are ``of the same order'', i.e.\ that $a_n = O(b_n)$ and $b_n = O(a_n)$. 

\begin{assump}[Rate Restrictions]
  \label{assump:rates}
  \mbox{}
  \begin{enumerate}[(i)]
    \item $p = o(n)$
    \item $\tau_\delta, \tau_\gamma, \lambda = o(n)$
    \item $\tau_\delta, \tau_\gamma, \lambda \asymp p$
  \end{enumerate}
\end{assump}

Part (i) of \autoref{assump:rates} states that the sample size dominates the number of controls and is the asymptotic equivalent of $p < n$ from \autoref{assump:controls} (i) above.\footnote{Formally \autoref{assump:rates} (i) is redundant given (ii) and (iii), but we state it explicitly to emphasize a key feature of our asymptotic framework.}
Since $p \rightarrow \infty$, this means that there are many controls but not \emph{so many} that they overwhelm the available data.
Part (ii) of \autoref{assump:rates} states that the sample size dominates each of the prior precisions.
Intuitively, this means that the priors are only ``weakly informative'' in the sense that they will be dominated by the data in large samples.
Along this asymptotic sequence, the shrinkage bias of the Ridge estimator from \eqref{eq:ridge} vanishes in the limit.
An $o(n)$ rate for the prior precisions also accords with the typical assumption from the frequentist literature on double machine learning.
Finally, part (iii) asserts that the prior precisions are of the same order as the number of controls.
This means that our prior imposes \emph{more shrinkage} when there are more controls.
Taken together, assumptions \autoref{assump:dgp}--\autoref{assump:rates} implied that, under the prior from \eqref{eq:SURprior}, the implied R-squared values from the regressions of $D$ on $X$ and $Y$ on $X$ remain well-behaved in the limit: they do not diverge to zero or one.
See \autoref{pro:rsquared} in the Appendix for details.

\subsection{Prior Properties}
\label{sec:prior}

Our BDML approach places priors on the reduced form coefficients $(\Sigma, B)$.
We now consider the prior that \eqref{eq:SURprior} \emph{implies} for other quantities of interest. 
We begin with the causal effect of interest, $\alpha$.
Partition $\Sigma$ into blocks $\Sigma_{11}, \Sigma_{12}$ and $\Sigma_{22}$ and likewise partition $\Sigma_0$ into $\Sigma_{0,11}$ and so on.
Then $\alpha = \Sigma_{12}/\Sigma_{22}$.
Under \eqref{eq:SURprior}, our prior for $\alpha$ follows a location-scale t-distribution.

\begin{lem}[Induced prior on \(\alpha\)] \label{lem:alpha-prior}
  Under \eqref{eq:RFmatrixReg} and \eqref{eq:SURprior}, the induced marginal prior on \(\alpha\) is a location-scale \(t\)-distribution, in other words
\[ \alpha \sim \frac{\Sigma_{0,12}}{\Sigma_{0,22}} + \frac{|\Sigma_0|^{1/2}}{\sqrt{\nu_0}\Sigma_{0,22}}t_{\nu_0},\] 
where $t_{\nu_0}$ denotes a Student-t random variable with $\nu_0$ degrees of freedom.
\end{lem}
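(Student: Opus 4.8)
The plan is to compute the induced prior by a direct change of variables, using the paper's own parametrization of \(\Sigma\) in \eqref{eq:Sigma}. First I would note that
\((\sigma_\varepsilon^2,\alpha,\sigma_V^2)\mapsto(\Sigma_{11},\Sigma_{12},\Sigma_{22})\)
defined by \(\Sigma_{11}=\sigma_\varepsilon^2+\alpha^2\sigma_V^2\), \(\Sigma_{12}=\alpha\sigma_V^2\), \(\Sigma_{22}=\sigma_V^2\) is a bijection onto the set of \(2\times2\) positive-definite matrices, with Jacobian determinant equal to \(\sigma_V^2\) and with \(|\Sigma|=\sigma_\varepsilon^2\sigma_V^2\). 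In these coordinates the \(2\times2\) inverse is explicit, \(\Sigma^{-1}=\bigl(\begin{smallmatrix}1/\sigma_\varepsilon^2 & -\alpha/\sigma_\varepsilon^2\\ -\alpha/\sigma_\varepsilon^2 & 1/\sigma_V^2+\alpha^2/\sigma_\varepsilon^2\end{smallmatrix}\bigr)\), so that \(\tr(\Sigma_0\Sigma^{-1})=\sigma_\varepsilon^{-2}\bigl(\Sigma_{0,11}-2\alpha\Sigma_{0,12}+\alpha^2\Sigma_{0,22}\bigr)+\Sigma_{0,22}/\sigma_V^2\). Substituting into the \(\IW(\nu_0,\Sigma_0)\) density \(\pi(\Sigma)\propto|\Sigma|^{-(\nu_0+3)/2}\exp\{-\tfrac12\tr(\Sigma_0\Sigma^{-1})\}\) and multiplying by the Jacobian gives a joint density that factors as a function of \(\sigma_V^2\) times a function of \((\alpha,\sigma_\varepsilon^2)\); completing the square in \(\alpha\) rewrites the relevant quadratic as \(\Sigma_{0,22}(\alpha-\alpha_0)^2+|\Sigma_0|/\Sigma_{0,22}\) with \(\alpha_0\equiv\Sigma_{0,12}/\Sigma_{0,22}\).

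The \((\alpha,\sigma_\varepsilon^2)\) factor is, for fixed \(\alpha\), an Inverse-Gamma kernel in \(\sigma_\varepsilon^2\) with shape \((\nu_0+1)/2\) and scale proportional to \(\Sigma_{0,22}(\alpha-\alpha_0)^2+|\Sigma_0|/\Sigma_{0,22}\). Integrating \(\sigma_\varepsilon^2\) out against that kernel leaves \(\pi(\alpha)\propto\bigl[\Sigma_{0,22}(\alpha-\alpha_0)^2+|\Sigma_0|/\Sigma_{0,22}\bigr]^{-(\nu_0+1)/2}\propto\bigl[1+\tfrac{\Sigma_{0,22}^2}{|\Sigma_0|}(\alpha-\alpha_0)^2\bigr]^{-(\nu_0+1)/2}\), which I would identify with the location-scale \(t_{\nu_0}\) density of location \(\alpha_0\) and squared scale \(|\Sigma_0|/(\nu_0\Sigma_{0,22}^2)\); equivalently \(\alpha\sim\alpha_0+\bigl(|\Sigma_0|^{1/2}/(\sqrt{\nu_0}\,\Sigma_{0,22})\bigr)t_{\nu_0}\), the claimed law. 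Matching normalizing constants is unnecessary since both sides are probability densities.

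An alternative, equivalent route passes to the precision matrix \(\Omega=\Sigma^{-1}\sim W_2(\nu_0,\Sigma_0^{-1})\), observes that \(\alpha=\Sigma_{12}/\Sigma_{22}=-\Omega_{12}/\Omega_{11}\) by the \(2\times2\) inverse formula, and invokes the standard decomposition of a Wishart matrix into its ``regression'' components: \(\Omega_{11}\sim(\Sigma_0^{-1})_{11}\,\chi^2_{\nu_0}\) and, conditionally on \(\Omega_{11}\), \(-\Omega_{12}/\Omega_{11}\sim\normal\bigl(\alpha_0,\ (\Sigma_0^{-1})_{22\cdot1}/\Omega_{11}\bigr)\), where \((\Sigma_0^{-1})_{22\cdot1}=1/\Sigma_{0,22}\) by the Schur-complement identity and \((\Sigma_0^{-1})_{11}=\Sigma_{0,22}/|\Sigma_0|\) by the \(2\times2\) inverse. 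Mixing a normal over the resulting scaled inverse-chi-square yields the \(t_{\nu_0}\), and the two constants reproduce the stated location and scale.

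The only real work in either approach is bookkeeping: in the first, the Jacobian, the trace term, the completion of the square and the Gamma integral; in the second, keeping the \(\IW\)/Wishart parametrization conventions straight (the degrees-of-freedom index, whether the scale or its inverse appears) and translating correctly between \(\Sigma_0\) and \(\Sigma_0^{-1}\) — this is where one is most likely to slip. I would present the change-of-variables argument as the main proof, since it is self-contained, uses only \eqref{eq:Sigma}, and as a by-product exhibits the conditionally conjugate structure (\(\sigma_V^2\) a priori independent of \((\alpha,\sigma_\varepsilon^2)\), with \(\sigma_\varepsilon^2\mid\alpha\) Inverse-Gamma) that is convenient elsewhere in the paper.
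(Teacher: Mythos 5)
Your proposal is correct and, in substance, the same as the paper's proof: the change of variables $(\Sigma_{11},\Sigma_{12},\Sigma_{22})\mapsto(\sigma_\varepsilon^2,\alpha,\sigma_V^2)$ is exactly the decomposition of the Inverse-Wishart into the Schur complement $\Sigma_{11\cdot 2}=\sigma_\varepsilon^2$, the regression coefficient $\alpha=\Sigma_{12}/\Sigma_{22}$, and $\Sigma_{22}$, which the paper simply cites as a known property ($\alpha\mid\Sigma_{11\cdot2}$ normal, $\Sigma_{11\cdot2}$ inverse-gamma, mix to get the $t_{\nu_0}$). The only difference is that you derive this factorization explicitly (Jacobian, trace expansion, completing the square, Gamma integral) rather than invoking it, and your Wishart/precision-matrix alternative is an equally valid restatement of the same fact.
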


If the prior scale matrix $\Sigma_0$ is diagonal, as is typical in practice, then $\Sigma_{0,12}/\Sigma_{0,22} = 0$ so the implied prior for $\alpha$ is symmetric around zero. 
Choosing a small value for $\nu_0$ gives this prior heavy tails, making it compatible with both very large positive and negative values for the effect of interest.
This is reassuring: although we have placed a prior on the reduced form rather than the structural model, we can easily avoid imposing unintentionally strong prior beliefs on $\alpha$ by making $\Sigma_0$ diagonal.

We now consider the implied prior for a different but equally important quantity, namely the \emph{selection bias}.
This idea is most easily explained in the case where $D_i$ is binary.
For the general version, see 
Propositions \ref{pro:conf-bias-naive}--\ref{pro:conf-bias-BDML} in the Appendix.
Following the usual convention in applied microeconomics, we define the selection bias as the difference between the quantity identified by a simple comparison of mean outcomes, treated minus untreated, and the true causal effect.
Under \eqref{eq:linearY}, this becomes
\[
  \text{SB} \equiv \left[ \mathbbm{E}(Y_i|D_i=1) - \mathbbm{E}(Y_i|D_i=0) \right] - \alpha = \left[ \mathbbm{E}(X_i|D_i=1) - \mathbbm{E}(X_i|D_i=0)' \right]\beta.
\]

As mentioned in \autoref{sec:RIC} above,  \cite{linero_nonparametric_2023} shows that any model that factorizes as in \eqref{eq:LikeFactorize} and in which and $\beta$ and $\gamma$ are \emph{a priori} independent implies a highly dogmatic prior for the extent of selection bias when $p$ is large.
This is another lens through which to view RIC: it arises from ``accidentally'' imposing a strong prior belief that there is little or no selection bias.
The following result shows that the na\"ive estimator suffers from this problem whereas our BDML estimator does not.

\begin{pro}[Selection bias: na\"ive estimator vs BDML] \label{pro:sel-bias}
  Suppose that Assumptions \ref{assump:dgp}--\ref{assump:rates}  hold.
  Then, as $n,p \rightarrow \infty$ the naive (Ridge) prior implies 
  \[
  \SB_\naive=\frac{\gamma'\Sigma_X\beta}{\Sigma_{22}+\gamma'\Sigma_X\gamma}\pto0,
  \]
  whereas the prior from \eqref{eq:SURprior} implies 
  \[
  \SB_\BDML\pto\frac{\Sigma_{12}}{\Sigma_{22}+\gamma'\Sigma_X\gamma}.
  \]
\end{pro}

When $n$ and $p$ are large, the na\"ive approach in effect \emph{assumes away selection bias}: it implies a prior that is concentrated in a small neighborhood around zero.
But if there is no selection bias, then there is no need to adjust for $X_i$ in the first place: it is as if $D_i$ had been randomly assigned!
Intuitively, in high-dimensional spaces randomly chosen vectors tend to be nearly orthogonal to each other.
Since $\Sigma_X$ is positive definite, the numerator \(\gamma'\Sigma_X\beta\) from $\text{SB}_{\text{naive}}$ converges in probability to zero. 
In contrast, the numerator of $\text{SB}_{\text{BDML}}$ is $\Sigma_{12}$, which has a proper prior that is unaffected by the sample size or the number of control regressors.
This is a clear advantage of working with the reduced form: stating a prior in which $\delta$ and $\gamma$ are independent is innocuous because the implied prior for $\delta$ and $\beta = \delta - \alpha \gamma$ are \emph{not independent}.

\subsection{Posterior Properties}

Let $\beta^*=\delta^*-\alpha^*\gamma^*$. We have the following result on the posterior mean of  $\alpha$ under the na\"ive estimator.
\begin{pro}[Posterior mean: na\"ive estimator] 
  \label{pro:post-mean-naive}
  Assume that the model and prior are as above. Suppose that Assumptions \ref{assump:dgp}--\ref{assump:controls} hold. Then, the posterior mean of \(\alpha\) is given by 
  \[
  \hat\alpha_\naive=\alpha^*+\frac{\lambda} {n}\frac {\gamma^{*\prime}\beta^*} {\Sigma^*_{22}}+o\left(\frac {p} {n}\right)+O_p\left(\frac 1 {\sqrt n}\right),
  \]
  where the \(O_p(1/\sqrt n)\) term is mean zero.
  Therefore, the naive estimator is consistent, but not \(\sqrt n\)-consistent for \(\lim_{n\to\infty}p^2/ n=\infty\) and \(\gamma^{*\prime}\beta^*\neq0\).
\end{pro}

Note that a similar (in)consistency argument can be established for the cases with \(\lambda,p\asymp n\), as \eqref{eq:naive-fc} still holds.

\paragraph{Conditional posterior} Note that by Assumption \ref{assump:trueRF}, the true \(\Sigma^*\) is finite and positive definite, so \(\Sigma^{-1}\) is also finite and positive definite in the limit, almost surely in the posterior. Also by Assumption \ref{assump:controls}, \(\Sigma_X\) is finite with a positive proportion of positive eigenvalues. If \(X'X\) does not have full rank, its inverse is given by the Moore-Penrose pseudoinverse.

The conditional posterior of \(B\) given \(\Sigma\) is
\[
p(\tvec(B) \mid X, W, \Sigma) \propto \exp \left(-\frac 1 2 \tvec(B - B_n)' V_n^{-1} \tvec(B - B_n) \right), 
\]
where
\begin{align*}
V_n &= \left[ \Sigma^{-1} \otimes X'X + V_0^{-1} \right]^{-1},\\
\tvec(B_n) &= V_n \left[ \left( \Sigma^{-1} \otimes X'X \right) \tvec(\hat{B}) + V_0^{-1} \tvec(B_0) \right],
\end{align*}
and \(\hat{B} =[\hat\delta,\hat\gamma]=(X'X)^{-1}X'W\) is the OLS estimator of \(B\). And the conditional posterior of \(\Sigma\) given \(B\) is
\[
\Sigma \mid X, W, B \sim \IW\left(\nu_n, \,\Sigma_n\right),
\]
where
\begin{align*}
\nu_n &= \nu_0 + n, \\
\Sigma_n &= \Sigma_0 + (W - X B)'(W - X B).
\end{align*}

Below, for notation simplicity, let \(B_0=0\).
\begin{rem}[Posterior: BDML, finite sample] \label{rmk:post-BDML-finite}
  {\normalfont
  We cannot derive the exact posterior distribution of \(\alpha\) in the finite sample case, as there is no closed form for the marginal posterior distribution of \(\Sigma\). However, we can derive the posterior of \(\alpha\) in the limit. 
  }
\end{rem}
\begin{pro}[Marginal posterior distribution of \(\Sigma\)] 
  \label{pro:marginal-post-Sigma}
  Assume that the model and prior are as above. Suppose that Assumptions \ref{assump:dgp}--\ref{assump:rates} hold. We can approximate the marginal posterior of \(\Sigma\) as
  \begin{align*}
  p(\Sigma | X, W) \propto |\Sigma|^{-(\tilde\nu_n + 3)/2} \exp\left(-\frac{1}{2}\tr(\tilde{\Sigma}_n\Sigma^{-1})\right)  \exp\left(\frac{1}{2}\tr(\tilde C_n\Sigma)\right)(1+o_p(1)),
  \end{align*}
  where
  \begin{align*}
  \tilde\nu_n &= \nu_0 + n-p, \\
  \tilde{\Sigma}_n &= \Sigma_0 + (W - X\hat{B})'(W - X\hat{B}), \\
  \tilde C_n &= \begin{bmatrix}
    \tau_{\delta}^2 \hat\delta'(X'X)^{-1}\hat\delta
   - \tau_{\delta}\tr\left((X'X)^{-1}\right) 
   & \tau_{\delta}\tau_{\gamma} \hat\delta'(X'X)^{-1}\hat\gamma \\
  \tau_{\delta}\tau_{\gamma} \hat\gamma'(X'X)^{-1}\hat\delta
   & \tau_{\gamma}^2 \hat\gamma'(X'X)^{-1}\hat\gamma
   - \tau_{\gamma}\tr\left((X'X)^{-1}\right) 
  \end{bmatrix}.
  \end{align*}
\end{pro}

\begin{rem}[Intuition on the posterior of \(\Sigma\)]\label{rmk:post-Sigma}
  {\normalfont
  The \(\exp\left(\frac{1}{2}\tr(\tilde C_n\Sigma)\right)\) term introduces exponential tilting to the Inverse-Wishart distribution. To visualize this effect, we could think of the tilting as ``pulling'' the distribution of \(\Sigma\) in the direction of matrices that have a structure similar to \(\tilde C_n\) (or \(-\tilde C_n^{-1}\)), if \(\tilde C_n\) is positive (or negative) definite.\footnote{If \(\tilde C_n\) is negative definite, then \(X\) follows a multivariate Generalized Inverse Gaussian distribution.}

As \(n\rightarrow\infty\), \(\tau_{\delta},\tau_{\gamma} = o(n)\), \(\tilde C_n = o(n)\), and \(\tilde{\Sigma}_n = O(n)\) so the IW part would dominate the tilting part.
  }
\end{rem}

\begin{pro}[Posterior mean: BDML] \label{pro:post-BDML}
  Assume that the model and prior are as above. Suppose that Assumptions \ref{assump:dgp}--\ref{assump:rates} hold. The posterior mean of \(\alpha|X,W\) under the BDML estimator is given by
  \begin{align*}
  \hat\alpha_\BDML&=\E[\alpha|X,W]\\ &=\alpha^*+\frac {\alpha^*} n\left(\frac{\Sigma_{0,12}+(\Sigma^*C^*\Sigma^*)_{12}}{\Sigma^*_{12}}-\frac{\Sigma_{0,22}+(\Sigma^*C^*\Sigma^*)_{22}}{\Sigma^*_{22}}\right)\\ &\quad+o_p\left(\frac 1 n\right)+o_p\left(\left(\frac p n\right)^2\right)+O_p\left(\frac 1 {\sqrt n}\right), 
  \end{align*}
  where 
  \begin{align*}
    C^* &= \frac 1 n\begin{bmatrix}
        \tau_{\delta}^2 \delta^{*\prime}\Sigma_X^{-1}\delta^*
       - \tau_{\delta}\tr\left(\Sigma_X^{-1}\right) 
       & \tau_{\delta}\tau_{\gamma} \delta^{*\prime}\Sigma_X^{-1}\gamma^* \\
      \tau_{\delta}\tau_{\gamma} \gamma^{*\prime}\Sigma_X^{-1}\delta^*
       & \tau_{\gamma}^2 \gamma^{*\prime}\Sigma_X^{-1}\gamma^*
       - \tau_{\gamma}\tr\left(\Sigma_X^{-1}\right)
        \end{bmatrix},
    \end{align*} 
    and the \(O_p\left(\frac 1 {\sqrt n}\right)\) term is mean zero.
    Therefore, \(\hat\alpha_\BDML\) is consistent. Moreover, if \(p=o(n^{3/4})\), \(\hat\alpha_\BDML\) is \(\sqrt n\)-consistent.
\end{pro}

\begin{rem}[BDML vs.\ na\"ive estimator]\label{rmk:bdml-naive}
  {\normalfont
 Compared with Proposition \ref{pro:post-mean-naive} for the na\"ive estimator, we see that when \(p\) is between the order of \(n^{1/2}\) and \(n^{3/4}\), the BDML estimator is \(\sqrt n\)-consistent, while the na\"ive estimator is not; this is in line with the argument for FDML in Chernozhukov et al.\ (2018, EctJ). In other words, BDML and FDML have bias of order $p^2/n^2$ compared to $p/n$ for Na\"ive.
  }
\end{rem}

\begin{rem}[BDML vs.\ FDML]\label{rmk:bdml-fdml}
{\normalfont
For the frequentist DML based on ridge, for the first stage with \(\tau_{\delta},\tau_{\gamma}=o(n)\), 
\begin{align*}
\tvec(\hat B_\FDML-\hat B) = - V_{n,FDML}V_0^{-1}\tvec(\hat B),
\end{align*}
where 
\begin{align*}
  V_{n,FDML} = \begin{bmatrix}
    [(X'X)^{-1}+\tau_{\delta}I]^{-1} & 0 \\ 0 & [(X'X)^{-1}+\tau_{\gamma}I]^{-1}
  \end{bmatrix}
  = I\otimes (X'X)^{-1}(1+o_p(1)).
  \end{align*}
  Then, 
  \begin{align*}
    \hat B_\FDML-\hat B = - (X'X)^{-1}[\tau_{\delta}\hat\delta,\, \tau_{\gamma}\hat\gamma](1+o_p(1)).
    \end{align*}
 The sample covariance matrix of the residues after the first stage is 
\begin{align*}
(W-X \hat B_\FDML)'(W-X\hat B_\FDML)/n
&=\bar\Sigma+(\hat B_\FDML-\hat B)' (X'X) (\hat B_\FDML-\hat B)\\
&=\bar\Sigma+\tilde C_{n,2}/n(1+o_p(1)),
\end{align*}
where 
\begin{align*}
\bar\Sigma &= (W - X\hat{B})'(W - X\hat{B}),\\
  \tilde C_{n,2}&=\begin{bmatrix}
    \tau_{\delta}^2\hat\delta'(X'X)^{-1}\hat\delta & \tau_{\delta}\tau_{\gamma}\hat\delta'(X'X)^{-1}\hat\gamma\\
    \tau_{\delta}\tau_{\gamma}\hat\gamma'(X'X)^{-1}\hat\delta & \tau_{\gamma}^2\hat\gamma'(X'X)^{-1}\hat\gamma
    \end{bmatrix},
\end{align*}
Then, 
\begin{align*}
  \hat\alpha_\FDML&=\frac{\bar\Sigma_{12}+\tilde C_{n,2,12}/n(1+o_p(1))}
  {\bar\Sigma_{22}+\tilde C_{n,2,22}/n(1+o_p(1))}\\ 
  &=\alpha^*+\frac {\alpha^*} n\left(\frac{(C^*_{2})_{12}}{\Sigma^*_{12}}-\frac{(C^*_{2})_{22}}{\Sigma^*_{22}}\right)+o_p\left(\frac 1 n\right)+o_p\left(\left(\frac p n\right)^2\right)+O_p\left(\frac 1 {\sqrt n}\right),
\end{align*}
where \(\tilde C_{n,2} = C^*_{2}(1+o_p(1))\) with
\begin{align*}
  C^*_{2} &= \frac 1 n\begin{bmatrix}
    \tau_{\delta}^2 \delta^{*\prime}\Sigma_X^{-1}\delta^*
   & \tau_{\delta}\tau_{\gamma} \delta^{*\prime}\Sigma_X^{-1}\gamma^* \\
  \tau_{\delta}\tau_{\gamma} \gamma^{*\prime}\Sigma_X^{-1}\delta^*
   & \tau_{\gamma}^2 \gamma^{*\prime}\Sigma_X^{-1}\gamma^*
    \end{bmatrix},
\end{align*}
and the \(O_p\left(\frac 1 {\sqrt n}\right)\) term is mean zero. 

Comparing with the BDML case in Proposition \ref{pro:post-BDML}, the formula takes a similar form. Note that \(C^*_2=O(p^2/n)\), so the FDML estimator is also consistent, and \(\sqrt n\)-consistent if \(p=o(n^{3/4})\).
The higher order differences being
\begin{itemize}
  \item BDML has an extra \(\Sigma_0/n\) term from the prior of the error covariance. 
  \item FDML incorporates \(C^*_{2}\), which is the second component of the tilting term in the BDML case in \eqref{eq:tilde-C2}. BDML has an extra first component of the tilting term in \eqref{eq:tr-sigma1}, coming from the first order approximation of \(|V_n|^{1/2}\).
\item In addition, the last term in BDML is based on \(\Sigma^*C^*\Sigma^*\). While the variance terms in \(\Sigma^*\) can be absorbed into \(\tau_{\delta}\) and \(\tau_{\gamma}\), the correlation is not captured in FDML.
\end{itemize}
}
\end{rem}

For the bias-variance trade-off, note that the variance of the posterior mean is not the same as the posterior variance. As we only have posterior variance for the BDML estimator, below we focus on deriving and comparing the variance of the posterior mean.

\begin{pro}[Variances of \(\hat\alpha\)]
  \label{pro:var-alpha}
  Assume that the model and prior are as above. Suppose that Assumptions \ref{assump:dgp}--\ref{assump:rates} hold. The asymptotic variance of \(\hat\alpha_\naive\), \(\hat\alpha_\BDML\) and \(\hat\alpha_\FDML\) are the same, and are given by 
  \[
  \frac{\Sigma^*_{11}\Sigma^*_{22}-(\Sigma^*_{12})^2}{(\Sigma^*_{22})^2}.
  \]
  \end{pro}

\begin{rem}[Comparison of variances]\label{rmk:var-alpha}
  {\normalfont
  The variances of \(\hat\alpha_\BDML\), \(\hat\alpha_\FDML\), and \(\hat\alpha_\naive\) are the same. This is because the variances of the error terms are the same, which is the dominating term in the variance of \(\hat\alpha\). Therefore, to minimize MSE, we would focus on the bias of the estimators, where the BDML and FDML estimators are \(\sqrt n\)-consistent in more general cases.
  }
\end{rem}

\subsection{Bernstein–von Mises theorem}  

Under the assumption that $\sigma_{\varepsilon}^{2*}$ and $\sigma_{V}^{2*}$ are known, we can resort to Walker (2024) by verifying his assumptions in our high dimensional regression setup.

\begin{rem}[Advantages of BDML: BvM]\label{rmk:advantages-bvm}
  {\normalfont
  The procedure is semiparametrically efficient in that it is adaptive, incurring no cost from not knowing the nuisance parameters. It avoids the need for prior invariance and does not require strict smoothness conditions on the propensity score.

In addition, the method is robust due to its use of a Gaussian profile likelihood; it only requires the mean restriction—similar to ordinary least squares. This property ensures that Bayesian credible intervals remain efficient even under model misspecification.}
\end{rem} 

Let $P^*$ denote the probability measure based on the true DGP, and $\|\cdot\|_{TV}$ denote the total variation distance.

\begin{pro}[Bernstein--von Mises theorem: known $\sigma_{\varepsilon}^{2*}$ and $\sigma_{V}^{2*}$]
  \label{pro:bvm-known-sigma}
  Assume that the model and prior are as above, except that $\sigma_{\varepsilon}^{2*}$ and $\sigma_{V}^{2*}$ are known, the prior for $\alpha$ is continuous and positive over a neighborhood of $\alpha^*$, and  $\alpha^*$ is in the interior of its support. Suppose that Assumptions \ref{assump:dgp}--\ref{assump:rates} hold. Further suppose that $X_i$ is sub-Gaussian, and the eigenvalues of $\Sigma_X$ is uniformly bounded by $[\underline\lambda_X,\bar\lambda_X]$ with $0<\underline\lambda_X\le \bar\lambda_X<\infty$, $p=o\left(\sqrt{n}\right)$, and $\log n = o(p^{1/2})$. Then, as $n\rightarrow\infty$, 
  \[\left\|P\left(\alpha\in\cdot|X,W\right)-\normal\left(\alpha^*+\frac{\tilde\Delta_n^*}{\sqrt n},\,\frac 1 n I_n(\gamma^*)^{-1}\right)\right\|_{TV}\pstarto 0,\]
  where $\tilde I_n(\gamma)=\avgi\frac{(D_i-X_i'\gamma)^2}{\sigma_{\varepsilon}^{2*}}$ and $\tilde\Delta_n^*=\tilde I_n(\gamma^*)^{-1}\sqrti\frac{[Y_i-\alpha^*(D_i-X_i'\gamma^*)-X_i'\delta^*] (D_i-X_i'\gamma^*)}{\sigma_{\varepsilon}^{2*}}$.
\end{pro}

\begin{rem}[BvM: empirical $L_2$-norm]\label{rmk:bvm-norm}
  {\normalfont
  Here I consider the empirical $L_2$-norm in the definition of the shrinking neighborhood of $m_0$ to accommodate our linear setup with unbounded $X_i$. Specifically, for generic functions $f$ and $g$, the empirical $L_2$ inner product is defined as $\langle f,g\rangle_{n,2}=\avgi f(X_i)g(X_i)$, and the empirical $L_2$-norm is $\|f\|_{n}=\sqrt{\langle f,f\rangle_{n,2}}$. Then, the neighborhood of $m_0$ based on the empirical $L_2$-norm is defined as \[B_{\mathcal M,n,2}(m^*,\nu)=\{m\in\mathcal M:\,\|m-m^*\|_{n,2}\le\nu\},\] which is different from $B_{\mathcal M,\infty}(m^*,\nu)$ in Walker (2024) for the sup-norm. All proofs in Walker (2024) can be adapted to the empirical $L_2$-norm: see also the last sentence on page 8 in Walker (2024).
  }
\end{rem}

\begin{rem}[BvM: alternative conditions]\label{rmk:bvm-alt-cond}
  {\normalfont
  One stronger alternative condition is that the eigenvalues of $X'X/n$ are bounded away from zero and infinity, so we don't need matrix Bernstein and $\mathcal K_n=\mathbb R^p$ below.

  Another alternative condition is that $X_i$ is bounded, i.e., for some constant \(M_X>0\),
  \(
  \max_{1\le j\le p} |X_{i,j}| \le M_X.
  \) This is necessary for the sup-norm case, but incurs another $\sqrt p$ term in $\nu_n$, and only allows for $p=o(n^{1/4})$.
  }
\end{rem}

For a generic matrix $S$, define the Orlicz \(\psi_1\) norm as
\[
\|S\|_{\psi_1} = \inf\left\{s>0:\,\E\exp\left(\frac{\|S\|}{s}\right)\le 2\right\},
\]
where $\|\cdot\|$ is the spectral norm. Note that $\|S\|_{\psi_1}$ is finite if and only if $S$ is sub-exponential. Then, the matrix Bernstein inequality is as follows.

\begin{lem}[Matrix Bernstein]\label{lem:matrix-bernstein}
  Let \(S_1,\ldots,S_n\) be i.i.d., mean-zero, symmetric random matrices in \(\mathbb{R}^{p\times p}\), satisfying that
  \(
  \|S_i\|_{\psi_1} \le M_S
  \) for all $i$.
  Then, for any \(\epsilon \ge 0\),
  \[
  P\left\{ \left\|\sumi S_i\right\| \ge \epsilon \right\}
  \le 2p \cdot \exp\left( - c\cdot\min\left( \frac{\epsilon^2}{nM_S^2},\,\frac{\epsilon}{M_S} \right)\right),
  \]
  where \(c>0\) is an absolute constant.
\end{lem}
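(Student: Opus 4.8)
The plan is to give a self-contained proof via the matrix moment-generating-function method --- Tropp's master tail bound combined with Lieb's concavity theorem --- since the only departure from the classical \emph{bounded} matrix Bernstein inequality is that the almost-sure bound $\|S_i\|\le L$ is replaced by the sub-exponential control $\|S_i\|_{\psi_1}\le M_S$. This is handled at the level of matrix moments rather than by truncation. One could alternatively invoke the sub-exponential version of matrix Bernstein from the concentration literature directly; the argument below just records the short route.

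\textbf{Step 1 (reduction to one side).} Since $\|S\|=\max\{\lambda_{\max}(S),\lambda_{\max}(-S)\}$ and each $-S_i$ is again i.i.d., mean-zero, symmetric with $\|-S_i\|_{\psi_1}=\|S_i\|_{\psi_1}\le M_S$, a union bound reduces the claim to showing $P\{\lambda_{\max}(\sumi S_i)\ge\epsilon\}\le p\exp(-c\min(\epsilon^2/(nM_S^2),\,\epsilon/M_S))$; the factor $2$ then produces the $2p$ in the statement, and the case $\epsilon=0$ is trivial because the right-hand side is at least $2p\ge 2$.

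\textbf{Step 2 (matrix moments).} The bound $\|S_i\|_{\psi_1}\le M_S$ is equivalent, up to absolute constants, to moment growth $\E\|S_i\|^k\le k!\,(C_0 M_S)^k$ for every integer $k\ge 1$. In the positive-semidefinite order, for symmetric $S$ and $k\ge 2$ one has $S^k\preceq\|S\|^{k-2}S^2\preceq\|S\|^k\,\I$ --- check this in the eigenbasis of $S$, using $|\lambda|^{k-2}\le\|S\|^{k-2}$ for each eigenvalue $\lambda$. Taking expectations, $\E S_i^k\preceq (\E\|S_i\|^k)\,\I\preceq k!\,(C_0 M_S)^k\,\I$ for $k\ge 2$, while $\E S_i=\mathbf 0$ by hypothesis. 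Then, for $0\le\theta\le c_1/M_S$ with $c_1\equiv 1/(2C_0)$, a term-by-term expansion (justified by the moment bound) gives
\[
\E e^{\theta S_i}=\I+\theta\,\E S_i+\sum_{k\ge 2}\frac{\theta^k}{k!}\,\E S_i^k\preceq\Big(1+2C_0^2\theta^2 M_S^2\Big)\I\preceq\exp\!\big(2C_0^2\theta^2 M_S^2\big)\,\I,
\]
so, by operator monotonicity of the matrix logarithm, $\log\E e^{\theta S_i}\preceq 2C_0^2\theta^2 M_S^2\,\I$.

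\textbf{Step 3 (master bound and optimization).} By Tropp's master tail inequality (i.e.\ $\E\tr\exp(\theta\sumi S_i)\le\tr\exp(\sumi\log\E e^{\theta S_i})$, a consequence of Lieb's concavity theorem), the previous display, and $\tr\exp(\cdot)\le p\,\lambda_{\max}\exp(\cdot)$,
\[
P\{\lambda_{\max}(\textstyle\sumi S_i)\ge\epsilon\}\le\inf_{0<\theta\le c_1/M_S}p\,\exp\!\big(-\theta\epsilon+2C_0^2 n\theta^2 M_S^2\big).
\]
The unconstrained minimizer is $\theta^\star=\epsilon/(4C_0^2 nM_S^2)$. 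If $\epsilon\le 4c_1 C_0^2 nM_S$ then $\theta^\star$ is feasible and the exponent equals $-\epsilon^2/(8C_0^2 nM_S^2)$; otherwise take $\theta=c_1/M_S$, and since then $2C_0^2 nc_1^2\le\tfrac12 c_1\epsilon/M_S$, the exponent is at most $-\tfrac12 c_1\epsilon/M_S$. In either regime the bound is $p\exp(-c\min(\epsilon^2/(nM_S^2),\,\epsilon/M_S))$ for an absolute constant $c>0$; combined with Step 1 this is the stated inequality.

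\textbf{Main obstacle.} The only genuinely delicate point is Step 2: because odd powers $S^k$ are not positive semidefinite, one cannot bound $\E S_i^k$ entrywise in the PSD order directly, and it is the chain $S^k\preceq\|S\|^{k-2}S^2\preceq\|S\|^k\I$ --- together with justifying the term-by-term expansion of the matrix exponential in the PSD order --- that lets the sub-exponential moment growth feed cleanly into the master bound. Everything else (Lieb's theorem, the trace-to-$\lambda_{\max}$ passage, and the two-regime optimization producing the $\min$) is routine.
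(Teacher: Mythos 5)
Your proof is correct. The paper does not actually prove this lemma --- its ``proof'' is a one-line citation to Theorems 2.8.1 and 5.4.1 of Vershynin (2018), i.e.\ the scalar sub-exponential Bernstein inequality and the matrix Bernstein inequality for almost-surely \emph{bounded} summands. Since the lemma as stated assumes only $\|S_i\|_{\psi_1}\le M_S$ rather than a uniform bound on $\|S_i\|$, the cited theorems do not apply verbatim, and the reader is implicitly asked to splice the sub-exponential MGF control of Theorem 2.8.1 into the matrix machinery of Theorem 5.4.1. That splice is exactly what you carry out: the chain $S^k\preceq\|S\|^k\I$ converts the $\psi_1$ moment growth $\E\|S_i\|^k\le k!(C_0M_S)^k$ into a PSD bound on $\E S_i^k$, which feeds into Lieb/Tropp's master tail bound and the standard two-regime Chernoff optimization. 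Your identification of Step 2 as the delicate point is apt (the dominated-convergence justification for exchanging $\E$ with the matrix-exponential series, and the operator monotonicity of $\log$, are the only places where care is needed, and both go through). In short, your argument supplies the missing derivation behind the paper's citation; the paper buys brevity, while your route makes the lemma self-contained and makes explicit that the variance proxy $nM_S^2$ used in the statement is the crude bound $\|\sum_i\E S_i^2\|\lesssim nM_S^2$ rather than the sharper $\|\sum_i\E S_i^2\|$ appearing in the bounded case.
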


\begin{cor}[Eigenvalue bounds]\label{cor:eigen-bounds}
  Suppose that $X_i$ is sub-gaussian and i.i.d., and the eigenvalues of $\Sigma_X$ is uniformly bounded by $[\underline\lambda_X,\bar\lambda_X]$ with $0<\underline\lambda_X\le \bar\lambda_X<\infty$. Define event \[\mathcal K_n =\left\{X\in\mathbb R^{n\times p}:\,\lambda_{\min}\left(\frac{X'X} n\right)\ge \frac{\underline\lambda_X}2\text{ and }\lambda_{\max}\left(\frac{X'X} n\right)\le 2\bar\lambda_X\right\}.\]
  Then, $P(\mathcal K_n)\pto 1$ and $P(\mathcal K_n^c)=O(p^{-1})$, 
  as $n\to\infty$.
\end{cor}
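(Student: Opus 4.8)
The plan is to control the fluctuation $\|X'X/n-\Sigma_X\|$ in spectral norm and then transfer the bound to the extreme eigenvalues of $X'X/n$ by Weyl's inequality. Write $X'X/n-\Sigma_X=\frac1n\sum_{i=1}^n S_i$ with $S_i\equiv X_iX_i'-\Sigma_X$, a sum of i.i.d., mean-zero, symmetric $p\times p$ matrices. Fix the constant $\nu\equiv\frac12\min\{\underline\lambda_X,\bar\lambda_X\}$. On the event $\{\|X'X/n-\Sigma_X\|\le\nu\}$, Weyl's inequality gives $\lambda_{\min}(X'X/n)\ge\underline\lambda_X-\nu\ge\underline\lambda_X/2$ and $\lambda_{\max}(X'X/n)\le\bar\lambda_X+\nu\le 2\bar\lambda_X$, so $\mathcal K_n^c\subseteq\{\|\sum_{i=1}^n S_i\|>n\nu\}$ and it suffices to bound the probability of the latter event.

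Next I would bound the moment quantities for $S_i$ entering matrix Bernstein. Since $\|X_iX_i'\|=\|X_i\|_2^2=\sum_{j=1}^p X_{ij}^2$ is a sum of $p$ squared sub-Gaussian coordinates, each with $\psi_1$-norm bounded by a constant, the triangle inequality for the Orlicz norm gives $\|S_i\|_{\psi_1}\le\|X_iX_i'\|_{\psi_1}+\bar\lambda_X\le M_S$ for some $M_S=O(p)$. A direct fourth-moment computation (or the Hanson--Wright inequality) for sub-Gaussian vectors then gives $\|\E[S_i^2]\|\le C_1\big(\tr(\Sigma_X)\bar\lambda_X+\bar\lambda_X^2\big)=O(p)$, so the genuine variance proxy $\sigma^2\equiv\|\sum_{i=1}^n\E[S_i^2]\|$ is of order $np$, not $np^2$.

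I would then apply matrix Bernstein --- Lemma \ref{lem:matrix-bernstein}, or, more precisely, the version retaining $\sigma^2$ in place of the crude proxy $nM_S^2$ --- with $\epsilon=n\nu$. Both arguments of the minimum are of order $n/p$, since $\epsilon^2/\sigma^2\asymp n\nu^2/p$ and $\epsilon/M_S\asymp n\nu/p$, so $P(\mathcal K_n^c)\le 2p\exp(-c'n/p)$ for a constant $c'>0$. Because $p=o(\sqrt n)$ in the surrounding setup (see the hypotheses of Proposition \ref{pro:bvm-known-sigma}), we have $p\log p\le\frac12\sqrt n\log n=o(n)$, hence $c'n/p\ge 3\log p$ for all large $n$, and therefore $P(\mathcal K_n^c)=O(p^{-1})$; in particular $P(\mathcal K_n)\to 1$.

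The step that genuinely requires care is the variance bookkeeping. The form of matrix Bernstein stated in Lemma \ref{lem:matrix-bernstein} uses the crude proxy $nM_S^2$, which is of order $np^2$; feeding that in would only yield $2p\exp(-cn/p^2)$, and this does not deliver the claimed $O(p^{-1})$ rate --- nor even $P(\mathcal K_n)\to1$ --- once $p$ approaches $\sqrt n$ (e.g.\ $p^2\asymp n/\log n$), since the absolute constant $c$ may be small. The fix is to exploit sub-Gaussianity: establishing $\|\E[S_i^2]\|=O(p)$ uniformly in $p$, using only that $X_i$ is sub-Gaussian with the eigenvalues of $\Sigma_X$ confined to $[\underline\lambda_X,\bar\lambda_X]$, replaces $nM_S^2=O(np^2)$ by the true variance $\sigma^2=O(np)$ and closes the gap. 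Everything else --- Weyl's inequality, the Orlicz triangle inequality, and the choice of $\nu$ --- is routine.
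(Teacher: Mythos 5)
Your proof is correct, and it follows the same skeleton as the paper's argument (matrix Bernstein for the sum $\sum_i (X_iX_i'-\Sigma_X)$, then Weyl's inequality to transfer the spectral-norm bound to the extreme eigenvalues). The substantive difference is in the bookkeeping of the Orlicz norm and the variance proxy, and here your version is the more careful one. The paper treats $M_S=\|S_i\|_{\psi_1}$ as an absolute constant and chooses the shrinking threshold $\epsilon=M_S\sqrt{2\log p/(\tilde c n)}$, concluding $P(\mathcal K_n^c)\le 2/p$ and then asserting $\epsilon<\underline\lambda_X/2$ for large $n$. But since the Orlicz norm is taken with respect to the spectral norm and $\|X_iX_i'\|=\|X_i\|_2^2$ has expectation $\tr(\Sigma_X)\ge p\,\underline\lambda_X$, Jensen's inequality forces $M_S\gtrsim p$; it is not dimension-free. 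With the correct $M_S=O(p)$, the paper's threshold need not fall below $\underline\lambda_X/2$ unless $p^2\log p=o(n)$, and the crude proxy $nM_S^2=O(np^2)$ appearing in Lemma \ref{lem:matrix-bernstein} yields only an $\exp(-cn/p^2)$ exponent, which, as you note, does not deliver $O(p^{-1})$ --- or even $o(1)$ for a small absolute constant $c$ --- throughout the range $p=o(\sqrt n)$ needed for Proposition \ref{pro:bvm-known-sigma}. Your fix --- fixing a constant deviation $\epsilon=n\nu$, bounding the true matrix variance $\|\sum_i\E[S_i^2]\|=O(np)$ via a fourth-moment computation for sub-Gaussian vectors, and invoking the Bernstein bound with $\sigma^2$ in place of $nM_S^2$ --- gives $P(\mathcal K_n^c)\le 2p\exp(-c'n/p)$, and the final step using $n/p\gg\sqrt n\gg\log p$ under $p=o(\sqrt n)$ is airtight. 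The only caveat is that your argument (like any correct one) needs the rate restriction $p=o(\sqrt n)$, which is not stated in the corollary itself but is available in the only place the corollary is used; you flag this explicitly, which is appropriate.
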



\section{Simulation Study}
\label{sec:sim}
We now compare our proposed BDML to a number of alternatives from the literature in a simulation experiment adapted from  \cite{linero_nonparametric_2023}. 
In each replication we generate 
\begin{gather}
\begin{aligned}
\left\{X_i\right\}_{i=1}^n &\sim \text{iid Normal}_p(0, \mathbbm{I}_p)\\
\left\{(\varepsilon_i, V_i)'\right\}_{i=1}^n\mid X &\sim \text{iid Normal}\left(\begin{bmatrix} 0 \\ 0\end{bmatrix}, \begin{bmatrix} \sigma_\varepsilon^2 & 0 \\ 0 & 1\end{bmatrix} \right) \\
\beta\mid(X, \varepsilon, V) &\sim \text{Normal}_p\left(\mu_\beta, \sigma_\beta^2\mathbbm{I}_p\right).
\end{aligned}
\label{eq:SimDGP}
\end{gather}
We then construct $(D_i, Y_i)$ according to \eqref{eq:linearY}--\eqref{eq:linearD} above.
This design generates new control regressors $X$ and a new parameter vector $\beta$ in each replication but holds $\gamma$ and $\alpha$ constant across replications.
Following the ``fixed'' design from \cite{linero_nonparametric_2023}, we vary $\sigma_\varepsilon$ over a grid of values from $1$ to $4$ and set
\begin{equation}
  \alpha = 2, \quad
  \gamma = \iota_p / \sqrt{p}, \quad
  \mu_\beta = -\gamma/2, \quad
  \sigma_\beta^2 = 1/p, \quad
  n = 200, \quad
  p= 100
  \label{eq:SimDesign1}
\end{equation}
where $\iota_p$ denotes a $p$-vector of ones.

We consider three groups of estimators.
The first group includes two alternative versions of the procedure from \autoref{alg:linearBDML}, corresponding to alternative prior structures for $(\gamma, \delta)$.
Both of our BDML approaches place a flat prior on $\alpha$ and construct a weakly informative prior for $\Sigma$ by placing an LKJ$(4)$ prior on the correlation matrix for $(U,V)$ and independent Cauchy$(0, 2.5)$ priors on both $\sigma_U$ and $\sigma_V$. 
The first of our BDML approaches, \textbf{BDML-Basic}, places independent $\text{Normal}(0, 5^2)$ on the elements of $\delta$ and $\gamma$. 
The second, \textbf{BDML-Hier}, allows different standard deviations in the normal shrinkage priors for $\delta$ and $\gamma$.
We achieve this via a hierarchical prior that places independent Inverse-Gamma$(2,2)$ hyper-priors on $\sigma_\delta^2$ and $\sigma_\gamma^2$.\footnote{This is equivalent to placing independent Student-t$(4)$ distributions on each of the coefficients $\delta_j$ and $\gamma_j$.}
We sample from the posteriors for both BDML approaches via Hamiltonian Monte Carlo (HMC), using the no-U-turn (NUTS) sampler as implemented in the Stan probabilistic programming language.

The second group of estimators we consider contains Bayesian competitors to our BDML approach.
This group includes three estimators.
The first two, \textbf{HCPH} and \textbf{Linero}, correspond to the proposals from \cite{hahn_regularization_2018} and \cite{linero_nonparametric_2023} detailed in Algorithms \ref{alg:hahn}--\ref{alg:linero} above and rely on a first-stage regression of $D$ on $X$.
The third, \textbf{Na\"ive}, estimates a single-equation linear regression model including \eqref{eq:linearY} only and is a fully-Bayesian variant of the ridge regression approach described in \autoref{sec:RIC} above.
We carry out posterior sampling for these three approaches exactly as in \cite{linero_nonparametric_2023}.
In particular, we rely on the Gibbs sampler implemented in the \texttt{BLR()} function from the R package \texttt{BLR} for Bayesian linear regression.
Again following \cite{linero_nonparametric_2023} we place a flat prior on $\alpha$ across all three approaches and on $\kappa$ from \autoref{alg:linero} and a Jeffreys prior on all error variances.
We likewise place default ``ridge priors'' on the coefficients on $X$ across all regressions in both stages: independent Normal$(0, \sigma^2)$ priors on each coefficient and a Jeffreys prior on $\sigma^2$.\footnote{The resulting prior for the coefficients on $X$ is improper and sharply peaked at zero.}


The third and final group of estimators contains frequentist DML competitors to our BDML approach. 
Both of the approaches in this group estimate $\alpha$ via \eqref{eq:FDML} where the ``first-step'' estimators $\widehat{\gamma}$ and $\widehat{\delta}$ are constructed via ridge regression.\footnote{We use ridge rather than LASSO for these preliminary DML regressions because recent work by \cite{kolesar_fragility_2025} and \cite{shen_can_2024} suggests that the ``double-LASSO'' approach \citep{belloni_inference_2014} perform poorly due to its fragility in variable selection.}
We carry out ridge regression by computing the posterior mean from draws produced using \texttt{BLR()} with the default parameter values, as explained above.
We consider two versions of this FDML approach: ``full-sample'' \textbf{FDML-Full} and ``split-sample'' \textbf{FDML-Split}.
\textbf{FDML-Full} uses the full dataset for both estimation stages whereas \textbf{FDML-Split} randomly splits the sample partitions the sample into two halves, the first of which is used to estimate $(\widehat{\delta}_\text{ML}, \widehat{\gamma}_\text{ML})$ and the second of which is used to residualize $(Y,D)$ and estimate $\alpha$.

For each of the seven estimators of $\alpha$ described above, we compute the root mean-squared error (RMSE), along with the frequentist coverage probability and average width of a corresponding (nominal) 95\% confidence interval.
For the five Bayesian procedures, interval estimates are constructed as equal-tailed 95\% posterior credible intervals.
For the two frequentist DML procedures, confidence intervals are constructed based on the usual OLS interval for the regression of the residualized $Y$ on the residualized $D$, following the approach suggested by \cite{belloni_inference_2014} and often used in empirical research.
Results based on 200 simulation replications from the data-generating process given in \eqref{eq:SimDGP}--\eqref{eq:SimDesign1} appear in \autoref{tab:sim} and \autoref{fig:sim}.
Coverage probabilities are very poor for all methods except BDML-Basic, BDML-Hier, and Linero.
All three of these approaches produce coverage probabilities close to the nominal 95\% level, but BDML-Hier is closest.
From panel \ref{fig:sim-zoom-in} of \autoref{fig:sim}, we also see that BDML-Hier produces the shortest average intervals among the three methods with approximately correct coverage.
This holds across all values of $\sigma_\varepsilon$ in our simulation.
Turning our attention to RMSE, the three best-performing methods remain BDML-basic, BDML-Hier, and Linero.
Once again from panel \ref{fig:sim-zoom-in} of \autoref{fig:sim}, we see that BDML-Hier performs best.
The fact that BDML out-peforms Linero in this setting is particularly encouraging given that this simulation design was taken directly from \cite{linero_nonparametric_2023} and favors \autoref{alg:linero}.

\begin{table}[ht]
\centering
\begin{tabular}{lccccccc}
  \hline
  \hline
  Method &  $\sigma_\varepsilon$ & $n$ & $p$ & Coverage & RMSE & Avg.\ Width \\ 
  \hline
  BDML-Hier & 1& 200& 100& 0.94 & 0.09 & 0.36 \\ 
  BDML-Hier & 2& 200& 100& 0.94 & 0.18 & 0.66 \\ 
  BDML-Hier & 4& 200& 100& 0.94 & 0.35 & 1.28 \\ 
  BDML-Basic & 1& 200& 100& 0.93 & 0.11 & 0.41 \\ 
  BDML-Basic & 2& 200& 100& 0.91 & 0.22 & 0.80 \\ 
  BDML-Basic & 4& 200& 100& 0.92 & 0.46 & 1.54 \\ 
  \hline
  Linero & 1& 200& 100& 0.93 & 0.10 & 0.38 \\ 
  Linero & 2& 200& 100& 0.93 & 0.20 & 0.76 \\ 
  Linero & 4& 200& 100& 0.93 & 0.39 & 1.49 \\ 
  HCPH & 1 & 200& 100& 0.65 & 0.19 & 0.38 \\ 
  HCPH & 2 & 200& 100& 0.56 & 0.39 & 0.73 \\ 
  HCPH & 4 & 200& 100& 0.68 & 0.63 & 1.37 \\ 
  Na\"ive & 1& 200& 100& 0.49 & 0.17 & 0.29 \\ 
  Na\"ive & 2& 200& 100& 0.56 & 0.26 & 0.47 \\ 
  Na\"ive & 4& 200& 100& 0.73 & 0.34 & 0.83 \\ 
  \hline
  FDML-Full & 1& 200& 100& 0.82 & 0.13 & 0.31 \\ 
  FDML-Full & 2& 200& 100& 0.69 & 0.29 & 0.61 \\ 
  FDML-Full & 4& 200& 100& 0.71 & 0.56 & 1.27 \\ 
  FDML-Split & 1& 200& 100& 0.56 & 0.22 & 0.42 \\ 
  FDML-Split & 2& 200& 100& 0.79 & 0.28 & 0.70 \\ 
  FDML-Split & 4& 200& 100& 0.88 & 0.41 & 1.29 \\ 
   \hline
\end{tabular}
\caption{This table presents results for the estimators described in \autoref{sec:sim} and simulation from \eqref{eq:SimDGP}--\eqref{eq:SimDesign1} over a grid of values for $\sigma_\varepsilon$, the error variance in the structural equation given in \eqref{eq:linearY}. The column RMSE gives the root mean-squared error of the estimator for $\alpha$ under each method, while Coverage and Avg.\ Width give the coverage probability and average width of corresponding (nominal) 95\% confidence intervals. Results are based on 200 simulation replications.}
\label{tab:sim}
\end{table}

\begin{figure}
    \centering
    \begin{subfigure}{\textwidth}
        \centering
        \includegraphics[width=0.999\textwidth]{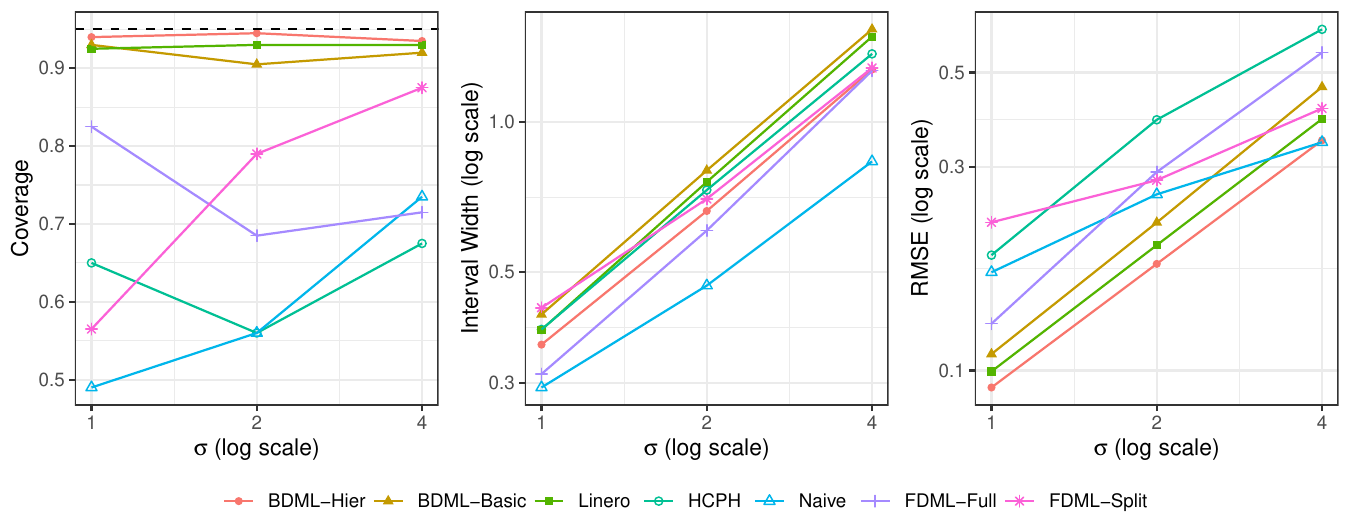}
        \caption{}
        \label{fig:sim-zoom-out}
    \end{subfigure}
    
    \vspace{1em}
    
    \begin{subfigure}{\textwidth}
        \centering
        \includegraphics[width=0.999\textwidth]{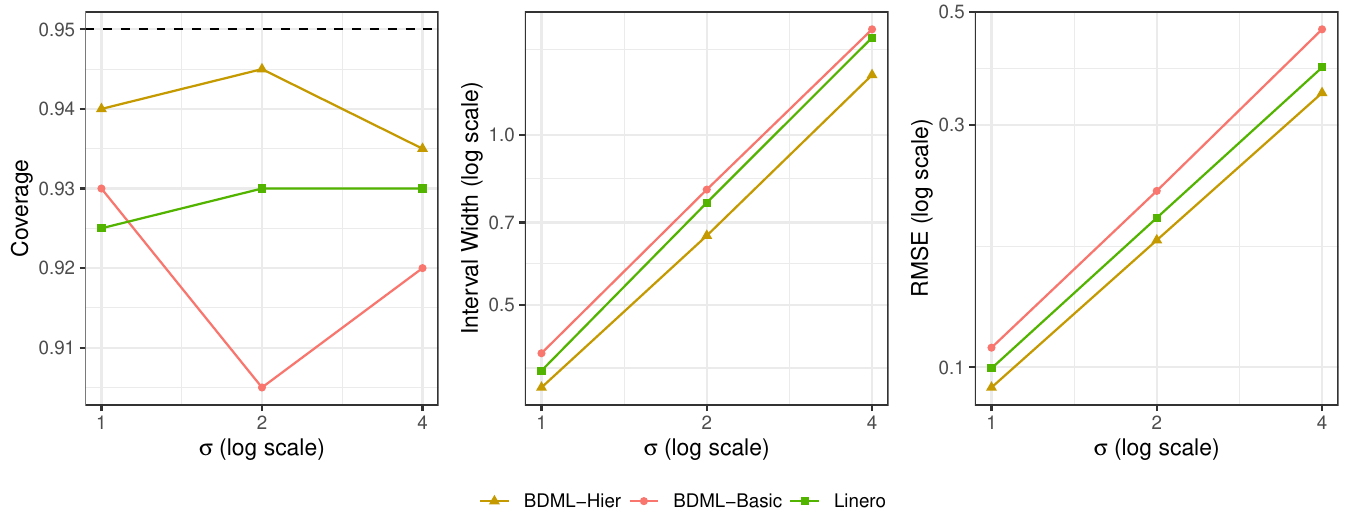}
        \caption{}
        \label{fig:sim-zoom-in}
    \end{subfigure}
    
    \caption{This figure presents results for the estimators described in \autoref{sec:sim} and simulation from \eqref{eq:SimDGP}--\eqref{eq:SimDesign1} over a grid of values for $\sigma_\varepsilon$, the error variance in the structural equation given in \eqref{eq:linearY}. The panel labled RMSE gives the root mean-squared error of the estimator for $\alpha$ under each method, while those labled Coverage and Avg.\ Width give the coverage probability and average width of corresponding (nominal) 95\% confidence intervals. Panel \ref{fig:sim-zoom-out} plots results on a scale such that all seven approaches are visible; panel \ref{fig:sim-zoom-in} zooms in to more clearly show the differences in performance betwen the three best estimators. Results are based on 200 simulation replications.}
    \label{fig:sim}
\end{figure}

%
%

\clearpage

\singlespacing
\bibliographystyle{elsarticle-harv}
\bibliography{bayesian-DML}

\begin{thebibliography}{35}
\expandafter\ifx\csname natexlab\endcsname\relax\def\natexlab#1{#1}\fi
\providecommand{\url}[1]{\texttt{#1}}
\providecommand{\href}[2]{#2}
\providecommand{\path}[1]{#1}
\providecommand{\DOIprefix}{doi:}
\providecommand{\ArXivprefix}{arXiv:}
\providecommand{\URLprefix}{URL: }
\providecommand{\Pubmedprefix}{pmid:}
\providecommand{\doi}[1]{\href{http://dx.doi.org/#1}{\path{#1}}}
\providecommand{\Pubmed}[1]{\href{pmid:#1}{\path{#1}}}
\providecommand{\bibinfo}[2]{#2}
\ifx\xfnm\relax \def\xfnm[#1]{\unskip,\space#1}\fi
\bibitem[{Ahrens et~al.(2025)Ahrens, Chernozhukov, Hansen, Kozbur, Schaffer and Wiemann}]{ahrens_introduction_2025}
\bibinfo{author}{Ahrens, A.}, \bibinfo{author}{Chernozhukov, V.}, \bibinfo{author}{Hansen, C.}, \bibinfo{author}{Kozbur, D.}, \bibinfo{author}{Schaffer, M.}, \bibinfo{author}{Wiemann, T.}, \bibinfo{year}{2025}.
\newblock \bibinfo{title}{An {Introduction} to {Double}/{Debiased} {Machine} {Learning}}.
\newblock \URLprefix \url{http://arxiv.org/abs/2504.08324}, \DOIprefix\doi{10.48550/arXiv.2504.08324}. \bibinfo{note}{arXiv:2504.08324 [econ]}.
\bibitem[{Alsan et~al.(2023)Alsan, Durvasula, Gupta, Schwartzstein and Williams}]{appl_alsan_representation_clintrials_2023}
\bibinfo{author}{Alsan, M.}, \bibinfo{author}{Durvasula, M.}, \bibinfo{author}{Gupta, H.}, \bibinfo{author}{Schwartzstein, J.}, \bibinfo{author}{Williams, H.}, \bibinfo{year}{2023}.
\newblock \bibinfo{title}{Representation and extrapolation: Evidence from clinical trials*}.
\newblock \bibinfo{journal}{The Quarterly Journal of Economics} \bibinfo{volume}{139}, \bibinfo{pages}{575--635}.
\newblock \URLprefix \url{https://doi.org/10.1093/qje/qjad036}, \DOIprefix\doi{10.1093/qje/qjad036}, \href{http://arxiv.org/abs/https://academic.oup.com/qje/article-pdf/139/1/575/57247273/qjad036.pdf}{{\tt arXiv:https://academic.oup.com/qje/article-pdf/139/1/575/57247273/qjad036.pdf}}.
\bibitem[{Angrist and Frandsen(2022)}]{angrist_machine_2022}
\bibinfo{author}{Angrist, J.D.}, \bibinfo{author}{Frandsen, B.}, \bibinfo{year}{2022}.
\newblock \bibinfo{title}{Machine {Labor}}.
\newblock \bibinfo{journal}{Journal of Labor Economics} \bibinfo{volume}{40}, \bibinfo{pages}{S97--S140}.
\newblock \URLprefix \url{https://www.journals.uchicago.edu/doi/10.1086/717933}, \DOIprefix\doi{10.1086/717933}.
\bibitem[{Antonelli et~al.(2022)Antonelli, Papadogeorgou and Dominici}]{antonelli_causal_2022}
\bibinfo{author}{Antonelli, J.}, \bibinfo{author}{Papadogeorgou, G.}, \bibinfo{author}{Dominici, F.}, \bibinfo{year}{2022}.
\newblock \bibinfo{title}{Causal inference in high dimensions: {A} marriage between {Bayesian} modeling and good frequentist properties}.
\newblock \bibinfo{journal}{Biometrics} \bibinfo{volume}{78}, \bibinfo{pages}{100--114}.
\newblock \URLprefix \url{https://academic.oup.com/biometrics/article/78/1/100-114/7459997}, \DOIprefix\doi{10.1111/biom.13417}.
\bibitem[{Armand et~al.(2020)Armand, Coutts, Vicente and Vilela}]{appl_armand_resource_curse_2020}
\bibinfo{author}{Armand, A.}, \bibinfo{author}{Coutts, A.}, \bibinfo{author}{Vicente, P.C.}, \bibinfo{author}{Vilela, I.}, \bibinfo{year}{2020}.
\newblock \bibinfo{title}{Does information break the political resource curse? experimental evidence from mozambique}.
\newblock \bibinfo{journal}{American Economic Review} \bibinfo{volume}{110}, \bibinfo{pages}{3431–53}.
\newblock \URLprefix \url{https://www.aeaweb.org/articles?id=10.1257/aer.20190842}, \DOIprefix\doi{10.1257/aer.20190842}.
\bibitem[{Bach et~al.(2024)Bach, Bach, De and Chernozhukov}]{bach_hyperparameter_2024}
\bibinfo{author}{Bach, P.}, \bibinfo{author}{Bach, P.}, \bibinfo{author}{De, U.H.}, \bibinfo{author}{Chernozhukov, V.}, \bibinfo{year}{2024}.
\newblock \bibinfo{title}{Hyperparameter {Tuning} for {Causal} {Inference} with {Double} {Machine} {Learning}: {A} {Simulation} {Study}}.
\newblock \bibinfo{journal}{Proceedings of Machine Learning Research} \bibinfo{volume}{236}, \bibinfo{pages}{1065--1117}.
\bibitem[{Belloni et~al.(2014)Belloni, Chernozhukov and Hansen}]{belloni_inference_2014}
\bibinfo{author}{Belloni, A.}, \bibinfo{author}{Chernozhukov, V.}, \bibinfo{author}{Hansen, C.}, \bibinfo{year}{2014}.
\newblock \bibinfo{title}{Inference on {Treatment} {Effects} after {Selection} among {High}-{Dimensional} {Controls}}.
\newblock \bibinfo{journal}{The Review of Economic Studies} \bibinfo{volume}{81}, \bibinfo{pages}{608--650}.
\newblock \URLprefix \url{https://academic.oup.com/restud/article-lookup/doi/10.1093/restud/rdt044}, \DOIprefix\doi{10.1093/restud/rdt044}.
\bibitem[{Breunig et~al.(2024)Breunig, Liu and Yu}]{breunig_semiparametric_2024}
\bibinfo{author}{Breunig, C.}, \bibinfo{author}{Liu, R.}, \bibinfo{author}{Yu, Z.}, \bibinfo{year}{2024}.
\newblock \bibinfo{title}{Semiparametric {Bayesian} {Diﬀerence}-in-{Diﬀerences}} \URLprefix \url{https://christophbreunig.com/wp-content/uploads/2024/12/bayes_did.pdf}.
\bibitem[{Breunig et~al.(2025)Breunig, Liu and Yu}]{breunig_double_2025}
\bibinfo{author}{Breunig, C.}, \bibinfo{author}{Liu, R.}, \bibinfo{author}{Yu, Z.}, \bibinfo{year}{2025}.
\newblock \bibinfo{title}{Double {Robust} {Bayesian} {Inference} on {Average} {Treatment} {Effects}}.
\newblock \bibinfo{journal}{Econometrica} \bibinfo{volume}{93}, \bibinfo{pages}{539--568}.
\newblock \URLprefix \url{https://www.econometricsociety.org/doi/10.3982/ECTA21442}, \DOIprefix\doi{10.3982/ECTA21442}.
\bibitem[{Chernozhukov et~al.(2018)Chernozhukov, Chetverikov, Demirer, Duflo, Hansen, Newey and Robins}]{chernozhukov_doubledebiased_2018}
\bibinfo{author}{Chernozhukov, V.}, \bibinfo{author}{Chetverikov, D.}, \bibinfo{author}{Demirer, M.}, \bibinfo{author}{Duflo, E.}, \bibinfo{author}{Hansen, C.}, \bibinfo{author}{Newey, W.}, \bibinfo{author}{Robins, J.}, \bibinfo{year}{2018}.
\newblock \bibinfo{title}{Double/debiased machine learning for treatment and structural parameters}.
\newblock \bibinfo{journal}{The Econometrics Journal} \bibinfo{volume}{21}, \bibinfo{pages}{C1--C68}.
\newblock \URLprefix \url{https://academic.oup.com/ectj/article/21/1/C1/5056401}, \DOIprefix\doi{10.1111/ectj.12097}.
\bibitem[{Chioda et~al.(2021)Chioda, Contreras-Loya, Gertler and Carney}]{appl_chioda_skillstraining_uganda_2021}
\bibinfo{author}{Chioda, L.}, \bibinfo{author}{Contreras-Loya, D.}, \bibinfo{author}{Gertler, P.}, \bibinfo{author}{Carney, D.}, \bibinfo{year}{2021}.
\newblock \bibinfo{title}{Making Entrepreneurs: Returns to Training Youth in Hard Versus Soft Business Skills}.
\newblock \bibinfo{type}{Working Paper} \bibinfo{number}{28845}. National Bureau of Economic Research.
\newblock \URLprefix \url{http://www.nber.org/papers/w28845}, \DOIprefix\doi{10.3386/w28845}.
\bibitem[{Cotti et~al.(2025)Cotti, Courtemanche, Liang, Maclean, Nesson and Sabia}]{appl_cotti_ecig_flavor_bans_2025}
\bibinfo{author}{Cotti, C.}, \bibinfo{author}{Courtemanche, C.}, \bibinfo{author}{Liang, Y.}, \bibinfo{author}{Maclean, J.C.}, \bibinfo{author}{Nesson, E.}, \bibinfo{author}{Sabia, J.J.}, \bibinfo{year}{2025}.
\newblock \bibinfo{title}{The effect of e-cigarette flavor bans on tobacco use}.
\newblock \bibinfo{journal}{Journal of Health Economics} \bibinfo{volume}{102}, \bibinfo{pages}{103013}.
\newblock \URLprefix \url{https://www.sciencedirect.com/science/article/pii/S0167629625000487}, \DOIprefix\doi{https://doi.org/10.1016/j.jhealeco.2025.103013}.
\bibitem[{Delfino(2024)}]{appl_delfino_pink_collar_2024}
\bibinfo{author}{Delfino, A.}, \bibinfo{year}{2024}.
\newblock \bibinfo{title}{Breaking gender barriers: Experimental evidence on men in pink-collar jobs}.
\newblock \bibinfo{journal}{American Economic Review} \bibinfo{volume}{114}, \bibinfo{pages}{1816–53}.
\newblock \URLprefix \url{https://www.aeaweb.org/articles?id=10.1257/aer.20220582}, \DOIprefix\doi{10.1257/aer.20220582}.
\bibitem[{Dube et~al.(2020)Dube, Jacobs, Naidu and Suri}]{appl_dube_monopsony_2020}
\bibinfo{author}{Dube, A.}, \bibinfo{author}{Jacobs, J.}, \bibinfo{author}{Naidu, S.}, \bibinfo{author}{Suri, S.}, \bibinfo{year}{2020}.
\newblock \bibinfo{title}{Monopsony in online labor markets}.
\newblock \bibinfo{journal}{American Economic Review: Insights} \bibinfo{volume}{2}, \bibinfo{pages}{33–46}.
\newblock \URLprefix \url{https://www.aeaweb.org/articles?id=10.1257/aeri.20180150}, \DOIprefix\doi{10.1257/aeri.20180150}.
\bibitem[{Duflo et~al.(2021)Duflo, Dupas and Kremer}]{appl_duflo_ghana_2017}
\bibinfo{author}{Duflo, E.}, \bibinfo{author}{Dupas, P.}, \bibinfo{author}{Kremer, M.}, \bibinfo{year}{2021}.
\newblock \bibinfo{title}{The Impact of Free Secondary Education: Experimental Evidence from Ghana}.
\newblock \bibinfo{type}{NBER Working Papers} \bibinfo{number}{28937}. National Bureau of Economic Research, Inc.
\newblock \URLprefix \url{https://ideas.repec.org/p/nbr/nberwo/28937.html}, \DOIprefix\doi{None}.
\bibitem[{Gordon et~al.(2023)Gordon, Moakler and Zettelmeyer}]{appl_gordon_ms_advertising_measure_2023}
\bibinfo{author}{Gordon, B.R.}, \bibinfo{author}{Moakler, R.}, \bibinfo{author}{Zettelmeyer, F.}, \bibinfo{year}{2023}.
\newblock \bibinfo{title}{Close enough? a large-scale exploration of non-experimental approaches to advertising measurement}.
\newblock \bibinfo{journal}{Marketing Science} \bibinfo{volume}{42}, \bibinfo{pages}{768--793}.
\newblock \URLprefix \url{https://doi.org/10.1287/mksc.2022.1413}, \DOIprefix\doi{10.1287/mksc.2022.1413}.
\bibitem[{Hahn et~al.(2018)Hahn, Carvalho, Puelz and He}]{hahn_regularization_2018}
\bibinfo{author}{Hahn, P.R.}, \bibinfo{author}{Carvalho, C.M.}, \bibinfo{author}{Puelz, D.}, \bibinfo{author}{He, J.}, \bibinfo{year}{2018}.
\newblock \bibinfo{title}{Regularization and {Confounding} in {Linear} {Regression} for {Treatment} {Effect} {Estimation}}.
\newblock \bibinfo{journal}{Bayesian Analysis} \bibinfo{volume}{13}.
\newblock \URLprefix \url{https://projecteuclid.org/journals/bayesian-analysis/volume-13/issue-1/Regularization-and-Confounding-in-Linear-Regression-for-Treatment-Effect-Estimation/10.1214/16-BA1044.full}, \DOIprefix\doi{10.1214/16-BA1044}.
\bibitem[{Hahn et~al.(2020)Hahn, Murray and Carvalho}]{hahn_bayesian_2020}
\bibinfo{author}{Hahn, P.R.}, \bibinfo{author}{Murray, J.S.}, \bibinfo{author}{Carvalho, C.M.}, \bibinfo{year}{2020}.
\newblock \bibinfo{title}{Bayesian {Regression} {Tree} {Models} for {Causal} {Inference}: {Regularization}, {Confounding}, and {Heterogeneous} {Effects} (with {Discussion})}.
\newblock \bibinfo{journal}{Bayesian Analysis} \bibinfo{volume}{15}.
\newblock \URLprefix \url{https://projecteuclid.org/journals/bayesian-analysis/volume-15/issue-3/Bayesian-Regression-Tree-Models-for-Causal-Inference--Regularization-Confounding/10.1214/19-BA1195.full}, \DOIprefix\doi{10.1214/19-BA1195}.
\bibitem[{Heckman et~al.(2014)Heckman, Lopes and Piatek}]{heckman_treatment_2014}
\bibinfo{author}{Heckman, J.J.}, \bibinfo{author}{Lopes, H.F.}, \bibinfo{author}{Piatek, R.}, \bibinfo{year}{2014}.
\newblock \bibinfo{title}{Treatment {Effects}: {A} {Bayesian} {Perspective}}.
\newblock \bibinfo{journal}{Econometric Reviews} \bibinfo{volume}{33}, \bibinfo{pages}{36--67}.
\newblock \URLprefix \url{http://www.tandfonline.com/doi/abs/10.1080/07474938.2013.807103}, \DOIprefix\doi{10.1080/07474938.2013.807103}.
\bibitem[{Hussam et~al.(2022)Hussam, Kelley, Lane and Zahra}]{appl_hussam_psych_refugees_2022}
\bibinfo{author}{Hussam, R.}, \bibinfo{author}{Kelley, E.M.}, \bibinfo{author}{Lane, G.}, \bibinfo{author}{Zahra, F.}, \bibinfo{year}{2022}.
\newblock \bibinfo{title}{The psychosocial value of employment: Evidence from a refugee camp}.
\newblock \bibinfo{journal}{American Economic Review} \bibinfo{volume}{112}, \bibinfo{pages}{3694–3724}.
\newblock \URLprefix \url{https://www.aeaweb.org/articles?id=10.1257/aer.20211616}, \DOIprefix\doi{10.1257/aer.20211616}.
\bibitem[{Kolesár et~al.(2025)Kolesár, Müller and Roelsgaard}]{kolesar_fragility_2025}
\bibinfo{author}{Kolesár, M.}, \bibinfo{author}{Müller, U.K.}, \bibinfo{author}{Roelsgaard, S.T.}, \bibinfo{year}{2025}.
\newblock \bibinfo{title}{The {Fragility} of {Sparsity}}.
\newblock \URLprefix \url{http://arxiv.org/abs/2311.02299}, \DOIprefix\doi{10.48550/arXiv.2311.02299}. \bibinfo{note}{arXiv:2311.02299 [econ]}.
\bibitem[{Li et~al.(2023)Li, Ding and Mealli}]{li_bayesian_2023}
\bibinfo{author}{Li, F.}, \bibinfo{author}{Ding, P.}, \bibinfo{author}{Mealli, F.}, \bibinfo{year}{2023}.
\newblock \bibinfo{title}{Bayesian causal inference: a critical review}.
\newblock \bibinfo{journal}{Philosophical Transactions of the Royal Society A: Mathematical, Physical and Engineering Sciences} \bibinfo{volume}{381}, \bibinfo{pages}{20220153}.
\newblock \URLprefix \url{https://royalsocietypublishing.org/doi/10.1098/rsta.2022.0153}, \DOIprefix\doi{10.1098/rsta.2022.0153}.
\bibitem[{Linero(2023)}]{linero_nonparametric_2023}
\bibinfo{author}{Linero, A.R.}, \bibinfo{year}{2023}.
\newblock \bibinfo{title}{In {Nonparametric} and {High}-{Dimensional} {Models}, {Bayesian} {Ignorability} is an {Informative} {Prior}}.
\newblock \bibinfo{journal}{Journal of the American Statistical Association} , \bibinfo{pages}{1--14}\URLprefix \url{https://www.tandfonline.com/doi/full/10.1080/01621459.2023.2278202}, \DOIprefix\doi{10.1080/01621459.2023.2278202}.
\bibitem[{Linero and Antonelli(2023)}]{linero_how_2023}
\bibinfo{author}{Linero, A.R.}, \bibinfo{author}{Antonelli, J.L.}, \bibinfo{year}{2023}.
\newblock \bibinfo{title}{The how and why of {Bayesian} nonparametric causal inference}.
\newblock \bibinfo{journal}{WIREs Computational Statistics} \bibinfo{volume}{15}, \bibinfo{pages}{e1583}.
\newblock \URLprefix \url{https://wires.onlinelibrary.wiley.com/doi/10.1002/wics.1583}, \DOIprefix\doi{10.1002/wics.1583}.
\bibitem[{Luo et~al.(2023)Luo, Graham and McCoy}]{luo_semiparametric_2023}
\bibinfo{author}{Luo, Y.}, \bibinfo{author}{Graham, D.J.}, \bibinfo{author}{McCoy, E.J.}, \bibinfo{year}{2023}.
\newblock \bibinfo{title}{Semiparametric {Bayesian} doubly robust causal estimation}.
\newblock \bibinfo{journal}{Journal of Statistical Planning and Inference} \bibinfo{volume}{225}, \bibinfo{pages}{171--187}.
\newblock \URLprefix \url{https://linkinghub.elsevier.com/retrieve/pii/S0378375822001124}, \DOIprefix\doi{10.1016/j.jspi.2022.12.005}.
\bibitem[{Ray and Van Der~Vaart(2020)}]{ray_semiparametric_2020}
\bibinfo{author}{Ray, K.}, \bibinfo{author}{Van Der~Vaart, A.}, \bibinfo{year}{2020}.
\newblock \bibinfo{title}{Semiparametric {Bayesian} causal inference}.
\newblock \bibinfo{journal}{The Annals of Statistics} \bibinfo{volume}{48}.
\newblock \URLprefix \url{https://projecteuclid.org/journals/annals-of-statistics/volume-48/issue-5/Semiparametric-Bayesian-causal-inference/10.1214/19-AOS1919.full}, \DOIprefix\doi{10.1214/19-AOS1919}.
\bibitem[{Saarela et~al.(2016)Saarela, Belzile and Stephens}]{saarela_bayesian_2016}
\bibinfo{author}{Saarela, O.}, \bibinfo{author}{Belzile, L.R.}, \bibinfo{author}{Stephens, D.A.}, \bibinfo{year}{2016}.
\newblock \bibinfo{title}{A {Bayesian} view of doubly robust causal inference}.
\newblock \bibinfo{journal}{Biometrika} \bibinfo{volume}{103}, \bibinfo{pages}{667--681}.
\newblock \URLprefix \url{https://academic.oup.com/biomet/article-lookup/doi/10.1093/biomet/asw025}, \DOIprefix\doi{10.1093/biomet/asw025}.
\bibitem[{Shen and Xiu(2024)}]{shen_can_2024}
\bibinfo{author}{Shen, Z.}, \bibinfo{author}{Xiu, D.}, \bibinfo{year}{2024}.
\newblock \bibinfo{title}{Can {Machines} {Learn} {Weak} {Signals}?}
\newblock \bibinfo{type}{Technical Report}.
\newblock \URLprefix \url{https://papers.ssrn.com/sol3/papers.cfm?abstract_id=4748784}.
\bibitem[{Sims(2012)}]{sims_robins-wasserman_2012}
\bibinfo{author}{Sims, C.A.}, \bibinfo{year}{2012}.
\newblock \bibinfo{title}{{ROBINS}-{WASSERMAN}, {ROUND} {N}} .
\bibitem[{Walker(2025)}]{walker_parametrization_2025}
\bibinfo{author}{Walker, C.D.}, \bibinfo{year}{2025}.
\newblock \bibinfo{title}{Parametrization, {Prior} {Independence}, and the {Semiparametric} {Bernstein}-von {Mises} {Theorem} for the {Partially} {Linear} {Model}}.
\newblock \URLprefix \url{http://arxiv.org/abs/2306.03816}, \DOIprefix\doi{10.48550/arXiv.2306.03816}. \bibinfo{note}{arXiv:2306.03816 [math]}.
\bibitem[{Wüthrich and Zhu(2023)}]{wuthrich_omitted_2023}
\bibinfo{author}{Wüthrich, K.}, \bibinfo{author}{Zhu, Y.}, \bibinfo{year}{2023}.
\newblock \bibinfo{title}{Omitted {Variable} {Bias} of {Lasso}-{Based} {Inference} {Methods}: {A} {Finite} {Sample} {Analysis}}.
\newblock \bibinfo{journal}{Review of Economics and Statistics} \bibinfo{volume}{105}, \bibinfo{pages}{982--997}.
\newblock \URLprefix \url{https://direct.mit.edu/rest/article/105/4/982/107662/Omitted-Variable-Bias-of-Lasso-Based-Inference}, \DOIprefix\doi{10.1162/rest_a_01128}.
\bibitem[{Yang et~al.(2020)Yang, Chuang and Kuan}]{appl_yang_bign_audit_2020}
\bibinfo{author}{Yang, J.C.}, \bibinfo{author}{Chuang, H.C.}, \bibinfo{author}{Kuan, C.M.}, \bibinfo{year}{2020}.
\newblock \bibinfo{title}{Double machine learning with gradient boosting and its application to the big n audit quality effect}.
\newblock \bibinfo{journal}{Journal of Econometrics} \bibinfo{volume}{216}, \bibinfo{pages}{268--283}.
\newblock \URLprefix \url{https://www.sciencedirect.com/science/article/pii/S0304407620300245}, \DOIprefix\doi{https://doi.org/10.1016/j.jeconom.2020.01.018}. \bibinfo{note}{annals Issue in honor of George Tiao: Statistical Learning for Dependent Data}.
\bibitem[{Zellner(1971)}]{zellner_introduction_1971}
\bibinfo{author}{Zellner, A.}, \bibinfo{year}{1971}.
\newblock \bibinfo{title}{An {Introduction} to {Bayesian} {Inference} in {Econometrics}}.
\newblock \bibinfo{publisher}{John Wiley \& Sons}, \bibinfo{address}{New York}.
\bibitem[{Zhang et~al.(2022)Zhang, Naughton, Bondell and Reich}]{zhang_bayesian_2022}
\bibinfo{author}{Zhang, Y.D.}, \bibinfo{author}{Naughton, B.P.}, \bibinfo{author}{Bondell, H.D.}, \bibinfo{author}{Reich, B.J.}, \bibinfo{year}{2022}.
\newblock \bibinfo{title}{Bayesian {Regression} {Using} a {Prior} on the {Model} {Fit}: {The} {R2}-{D2} {Shrinkage} {Prior}}.
\newblock \bibinfo{journal}{Journal of the American Statistical Association} \bibinfo{volume}{117}, \bibinfo{pages}{862--874}.
\newblock \URLprefix \url{https://www.tandfonline.com/doi/full/10.1080/01621459.2020.1825449}, \DOIprefix\doi{10.1080/01621459.2020.1825449}.
\bibitem[{Zigler(2016)}]{zigler_central_2016}
\bibinfo{author}{Zigler, C.M.}, \bibinfo{year}{2016}.
\newblock \bibinfo{title}{The {Central} {Role} of {Bayes}’ {Theorem} for {Joint} {Estimation} of {Causal} {Effects} and {Propensity} {Scores}}.
\newblock \bibinfo{journal}{The American Statistician} \bibinfo{volume}{70}, \bibinfo{pages}{47--54}.
\newblock \URLprefix \url{https://www.tandfonline.com/doi/full/10.1080/00031305.2015.1111260}, \DOIprefix\doi{10.1080/00031305.2015.1111260}.

\end{thebibliography}

\appendix
\numberwithin{equation}{section}
\numberwithin{table}{section}
\numberwithin{figure}{section}

\clearpage
\section{Proofs}

\label{append:proofs}

The following lemma is a preliminary step in the proof of \autoref{pro:ridgeMSE}.

\begin{lem}
\label{lem:ridgeMSE}
Define the shorthand 
\[
  M = \begin{pmatrix} M_{11} & M_{12}\\
  M_{21} & M_{22}\end{pmatrix} \equiv \begin{pmatrix} D'D & D'X \\
    X'D & X'X + \lambda \mathbbm{I}_p
  \end{pmatrix}, \quad
  M^{-1} \equiv \begin{pmatrix}
    M^{11} & M^{12} \\
    M^{21} & M^{22}
  \end{pmatrix}
\]
and let $\widehat{\rho}' \equiv (D'D)^{-1}D'X$, $\widehat{\xi} \equiv \left[ \mathbbm{I}_p - D(D'D)^{-1}D' \right]X$ and $R = \widehat{\xi}'\widehat{\xi}$.
Then we have 
\begin{align}
  \label{eq:M12}
  M^{12} &= -\widehat{\rho}'(R + \lambda I)^{-1}\\
  M^{11} &= (D'D)^{-1} - M^{12} \widehat{\rho}\\
  \begin{pmatrix}
    M^{11} & M^{12}
  \end{pmatrix}
  \begin{pmatrix}
    D' D & D'X\\ X'D & X'X
  \end{pmatrix} &= 
  \begin{bmatrix}
    1 & (\widehat{\rho}'+ M^{12} R )
  \end{bmatrix}.
  \label{eq:M11M12}
\end{align}
\end{lem}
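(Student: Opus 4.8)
The plan is to read off all three identities from the standard partitioned-inverse formula applied to $M$, with the only non-routine ingredient being the identification of the $(2,2)$ Schur complement of $M$ with $R+\lambda\mathbbm{I}_p$. First I would write
\[
M = \begin{pmatrix} D'D & D'X \\ X'D & X'X\end{pmatrix} + \begin{pmatrix} 0 & 0_p' \\ 0_p & \lambda\mathbbm{I}_p\end{pmatrix},
\]
treat $D'D$ as an invertible scalar throughout (so that $(D'D)^{-1}$ commutes freely), and note $(D'D)^{-1}D'X = \widehat{\rho}'$ and $X'D(D'D)^{-1} = \widehat{\rho}$. Pivoting on the $(1,1)$ block, the partitioned-inverse formula gives $M^{12} = -(D'D)^{-1}D'X\,S^{-1} = -\widehat{\rho}'S^{-1}$ and $M^{11} = (D'D)^{-1} + (D'D)^{-1}D'X\,S^{-1}X'D(D'D)^{-1} = (D'D)^{-1} + \widehat{\rho}'S^{-1}\widehat{\rho}$, where $S \equiv (X'X + \lambda\mathbbm{I}_p) - X'D(D'D)^{-1}D'X$ is the Schur complement.

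The one computation worth spelling out is $S = R + \lambda\mathbbm{I}_p$. Since $P \equiv D(D'D)^{-1}D'$ is a symmetric idempotent projection, $\widehat{\xi} = (\mathbbm{I}_n - P)X$ satisfies $R = \widehat{\xi}'\widehat{\xi} = X'(\mathbbm{I}_n - P)^2 X = X'(\mathbbm{I}_n - P)X = X'X - X'D(D'D)^{-1}D'X$, so adding $\lambda\mathbbm{I}_p$ yields $S = R + \lambda\mathbbm{I}_p$. Substituting back gives $M^{12} = -\widehat{\rho}'(R+\lambda\mathbbm{I}_p)^{-1}$, which is \eqref{eq:M12}, and $M^{11} = (D'D)^{-1} + \widehat{\rho}'(R+\lambda\mathbbm{I}_p)^{-1}\widehat{\rho} = (D'D)^{-1} - M^{12}\widehat{\rho}$, the second identity.

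For \eqref{eq:M11M12} I would avoid recomputing anything. Because $M^{-1}M = \mathbbm{I}_{p+1}$, its first block-row reads $\begin{pmatrix} M^{11} & M^{12}\end{pmatrix} M = \begin{pmatrix} 1 & 0_p'\end{pmatrix}$, so subtracting the contribution of the $\lambda$-block,
\[
\begin{pmatrix} M^{11} & M^{12}\end{pmatrix}\begin{pmatrix} D'D & D'X \\ X'D & X'X\end{pmatrix} = \begin{pmatrix} 1 & 0_p'\end{pmatrix} - \begin{pmatrix} M^{11} & M^{12}\end{pmatrix}\begin{pmatrix} 0 & 0_p' \\ 0_p & \lambda\mathbbm{I}_p\end{pmatrix} = \begin{pmatrix} 1 & -\lambda M^{12}\end{pmatrix}.
\]
It then remains only to check $-\lambda M^{12} = \widehat{\rho}' + M^{12}R$, which is immediate from \eqref{eq:M12}: right-multiplying $M^{12} = -\widehat{\rho}'(R+\lambda\mathbbm{I}_p)^{-1}$ by $(R+\lambda\mathbbm{I}_p)$ gives $M^{12}R + \lambda M^{12} = -\widehat{\rho}'$, i.e.\ $-\lambda M^{12} = M^{12}R + \widehat{\rho}'$.

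The main obstacle here is essentially bookkeeping rather than analysis: keeping $D'D$ scalar so that inverses commute, staying consistent about which factor is $\widehat{\rho}$ versus $\widehat{\rho}'$, and correctly invoking idempotency of $\mathbbm{I}_n - P$ in the Schur-complement step. Beyond that the argument is purely algebraic.
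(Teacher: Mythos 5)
Your proof is correct, and for the first two identities it coincides exactly with the paper's argument: pivot the partitioned-inverse formula on the scalar block $D'D$, and identify the Schur complement with $R+\lambda\mathbbm{I}_p$ via idempotency of $\mathbbm{I}_n - D(D'D)^{-1}D'$. Where you diverge is the third identity \eqref{eq:M11M12}: the paper verifies it by multiplying out both block products $\begin{pmatrix}M^{11} & M^{12}\end{pmatrix}\begin{pmatrix}D'D \\ X'D\end{pmatrix}$ and $\begin{pmatrix}M^{11} & M^{12}\end{pmatrix}\begin{pmatrix}D'X \\ X'X\end{pmatrix}$ in a fairly long direct computation, whereas you read the first block-row of $M^{-1}M = \mathbbm{I}_{p+1}$, peel off the $\lambda$-penalty block to get $\begin{pmatrix}1 & -\lambda M^{12}\end{pmatrix}$, and then observe that $-\lambda M^{12} = \widehat{\rho}' + M^{12}R$ follows in one line from $M^{12}(R+\lambda\mathbbm{I}_p)=-\widehat{\rho}'$. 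Your route is shorter and less error-prone for that step and arrives at the identical expression, so nothing is lost; the paper's expanded computation has the mild advantage of also exhibiting the final form $\widehat{\rho}'\left[\mathbbm{I}_p-(R+\lambda\mathbbm{I}_p)^{-1}R\right]$ explicitly, which is the form reused in the bias formula of Proposition \ref{pro:ridgeMSE}, but that is immediate from your version as well.
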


\begin{proof}[Proof of \autoref{lem:ridgeMSE}]
By the partitioned matrix inverse formula,
\[
  M^{-1} = \begin{pmatrix}
    M^{11} & M^{12} \\
    M^{21} & M^{22}
  \end{pmatrix} 
  = \begin{pmatrix}
    M_{11}^{-1} + M_{11}^{-1}M_{12}S^{-1}M_{21}M_{11}^{-1} & -M_{11}^{-1}M_{12}S^{-1} \\
    -S^{-1}M_{21}M_{11}^{-1} & S^{-1}
\end{pmatrix}
\]
where we define 
\[
  S \equiv M_{22} - M_{21}M_{11}^{-1}M_{12}.
\]
The preceding expression for $M^{-1}$ holds so long as $M_{11}$ and $S$ are invertible.
We see that $M_{11} \equiv D'D$ is invertible whenever there is variation in the treatment variable while $M_{22} \equiv X'X + \lambda \mathbbm{I}_p$ is invertible when $\lambda > 0$ although possibly not when $\lambda = 0$, since $p$ could be larger than the sample size.
Now, defining $P_D \equiv D(D'D)^{-1}D'$, we have
\begin{align*}
  S &= M_{22} - M_{21} M_{11}^{-1}M_{12} = (X'X + \lambda \mathbbm{I}_p) - X'D (D'D)^{-1}D'X\\
  &= X' \left(\mathbbm{I}_p - P_D\right)X + \lambda \mathbbm{I}_p\\
  &= \left[\left(\mathbbm{I}_p - P_D\right)X\right]'\left[\left(\mathbbm{I}_p - P_D\right)X\right]+ \lambda \mathbbm{I}_p\\
  &= \widehat{\xi}' \widehat{\xi} + \lambda \mathbbm{I}_p\\
  &= R + \lambda \mathbbm{I}_p
\end{align*}
since $P_D$ is a projection matrix and $(\mathbbm{I}_p - P_D)$ is its orthogonal complement.
Substituting this simplified expression for $S$ into the expression for $M^{12}$ from above gives 
\[
  M^{12} = -M_{11}^{-1}M_{12} S^{-1} = -(D'D)^{-1} D'X (R + \lambda \mathbbm{I}_p)^{-1} = -\widehat{\rho}'(R + \lambda \mathbbm{I}_p)^{-1}.
\]
Substituting the expression for $M^{12}$ from the partitioned matrix inverse formula into the expression for $M^{11}$
\[
  M^{11} = M_{11}^{-1} + M_{11}^{-1}M_{12} S^{-1} M_{21} M_{11}^{-1} = M_{11}^{-1} - M^{12} M_{21} M_{11}^{-1}.
\]
It follows that $M^{11}$ can be written as
\begin{align*}
  M^{11} &=  M_{11}^{-1} - M^{12} M_{21} M_{11}^{-1} \\
  &= (D'D)^{-1} + M^{12}X'D (D'D)^{-1}\\
  &= (D'D)^{-1} + M^{12}\left[(D'D)^{-1}D'X\right]'\\
&= (D'D)^{-1} - M^{12}\widehat{\rho}.
\end{align*}
Now, using this more compact expression for $M^{11}$, we obtain
\begin{align*}
  \begin{bmatrix}
    M^{11}& M^{12}
  \end{bmatrix} 
  \begin{bmatrix}
    D'D\\ 
    X' D 
  \end{bmatrix} 
  &=  \begin{bmatrix}
    \left\{(D'D)^{-1} - M^{12}\widehat{\rho}\right\} & M^{12}
  \end{bmatrix} 
  \begin{bmatrix}
    D'D\\ 
    X' D 
  \end{bmatrix} \\
  &= \left[ (D'D)^{-1}- M^{12}\widehat{\rho} \right]D'D + M^{12} X'D \\
  &= 1 - M^{12}\widehat{\rho}D'D + M^{12} X'D \\
  &= 1 - M^{12}\left( X'D - \widehat{\rho} D'D \right)\\ 
  &= 1 - M^{12}\left[ X'D - \left\{ \left( D'D \right)^{-1}D'X \right\}' D'D \right]\\ 
  &= 1 - M^{12}\left[ X'D - X'D\left( D'D \right)^{-1} D'D \right]\\ 
  &= 1 - M^{12}\left( X'D - X'D \right)\\ 
  &= 1.
\end{align*}
Similarly, we have
\begin{align*}
  \begin{bmatrix}
    M^{11} & M^{12}
  \end{bmatrix} 
  \begin{bmatrix}
     D'X\\ 
     X'X
  \end{bmatrix} 
  &=  \begin{bmatrix}
    \left\{(D'D)^{-1} - M^{12}\widehat{\rho}\right\} & M^{12}
  \end{bmatrix} 
  \begin{bmatrix}
     D'X\\ 
      X'X
  \end{bmatrix} \\
&= \left[ (D'D^{-1}) - M^{12}\widehat{\rho} \right] D'X + M^{12} X'X\\
  &= \widehat{\rho}' - M^{12}\widehat{\rho} D'X + M^{12}X'X \\
  &= \widehat{\rho}' + M^{12}\left(X'X - \widehat{\rho}D'X \right) \\
  &= \widehat{\rho}' + M^{12}\left[X'X - X'D\left( D'D \right)^{-1}D'X \right] \\
  &= \widehat{\rho}' + M^{12}\left(X'X - X'P_DX \right) \\
  &= \widehat{\rho}' -\widehat{\rho}' \left( R + \lambda \mathbbm{I}_p \right)^{-1}R\\ 
  &= \widehat{\rho}' \left[ \mathbbm{I}_p -  \left(R + \lambda \mathbbm{I}_p \right)^{-1}R\right].
\end{align*}
The result follows.
\end{proof}

\begin{proof}[Proof of \autoref{pro:ridgeMSE}]
  Using the shorthand defined in the statement of \autoref{lem:ridgeMSE}, 
\[
  \widehat{\alpha}_{\lambda} = 
  \begin{bmatrix}
    M^{11} & M^{12}
  \end{bmatrix} 
  \begin{bmatrix}
    D' \\ X'
  \end{bmatrix} 
  Y 
  = 
  \begin{bmatrix}
    M^{11} & M^{12}
  \end{bmatrix} 
  \begin{bmatrix}
    D' \\ X'
  \end{bmatrix} 
  \left(\begin{bmatrix}
    D & X
  \end{bmatrix} 
  \begin{bmatrix}
    \alpha \\ \beta
\end{bmatrix} + \epsilon \right).
\]
Since $\mathbbm{E}[\varepsilon|X, D]=0$ it follows that 
\[
  \text{Bias}(\widehat{\alpha}_\lambda|X, D) = 
  \begin{bmatrix}
    M^{11} & M^{12}
  \end{bmatrix} 
  \begin{bmatrix}
    D'D & D'X\\ 
    X' D &  X'X
  \end{bmatrix} 
  \begin{bmatrix}
    \alpha \\ \beta
  \end{bmatrix} - \alpha.
\]
Thus, substituting \eqref{eq:M12} and \eqref{eq:M11M12},
\[
\text{Bias}(\widehat{\alpha}_\lambda|X, D) =
  \begin{bmatrix}
    1 & (\widehat{\rho}'+ M^{12} R )
  \end{bmatrix}  \begin{bmatrix}
    \alpha \\ \beta
  \end{bmatrix} - \alpha = \widehat{\rho}' \left[ \mathbbm{I}_p - (R + \lambda \mathbbm{I}_p)^{-1}R \right] \beta.
\]
For the variance calculation, since $\text{Var}(\varepsilon|X, D)=\sigma^2 \mathbbm{I}_p$, we obtain 
\begin{align*}
  \text{Var}\left( \widehat{\alpha}_\lambda|X, D \right) &=
  \text{Var}\left\{ \left.\begin{pmatrix}
      M^{11} & M^{12}
    \end{pmatrix} \begin{pmatrix}
      D' \\ X'
    \end{pmatrix}\left[ \begin{pmatrix}
        D & X
    \end{pmatrix}\begin{pmatrix}
  \alpha \\ \beta
  \end{pmatrix} + \epsilon\right]\right| X, D\right\}\\
  &= \sigma^2 
  \begin{bmatrix}
    M^{11} & M^{12}
  \end{bmatrix}
  \begin{bmatrix}
    D' D & D'X\\ X'D & X'X
  \end{bmatrix}
  \begin{bmatrix}
    M^{11} & M^{12}
  \end{bmatrix}'\\
  &= \sigma^2 \begin{bmatrix}
    1 & (\widehat{\rho}'+ M^{12} \widehat{\xi}'\widehat{\xi})
  \end{bmatrix}\begin{bmatrix}
    (M^{11})' \\
    (M^{12})'
  \end{bmatrix}\\
  &= \sigma^2 \left[ \left\{ (D'D)^{-1} - M^{12}\widehat{\rho} \right\}' + (\widehat{\rho}' + M^{12}R) (M^{12})' \right]\\
  &= \sigma^2\left[ (D'D)^{-1} + M^{12} R(M^{12})' \right]\\
  &= \sigma^2 \left[ (D'D)^{-1} + \widehat{\rho}'(R + \lambda \mathbbm{I}_p)^{-1} R(R + \lambda \mathbbm{I}_p)^{-1} \widehat{\rho}\right] 
\end{align*}
by substituting \autoref{lem:ridgeMSE}, since $R \equiv \widehat{\xi}'\widehat{\xi}$ is symmetric.
\end{proof}

\begin{pro}[Signal-to-noise ratio and $R^2$ of high-dimensional regression]  
  \label{pro:rsquared}
  Suppose that Assumptions \ref{assump:dgp}--\ref{assump:rates} hold. As $p\to\infty$, the signal-to-noise ratios of the $Y$ and $D$ regressions are
\[
  \SNR_Y \to \tau_{\delta}^{-1}p\frac{\mu^{(1)}}{\Sigma_{11}}, \quad \text{and} \quad \SNR_D \to \tau_{\gamma}^{-1}p\frac{\mu^{(1)}}{\Sigma_{11}}.
  \]
And the $R^2$ of the $Y$ and $D$ regressions are
\[
R^2_Y \to \frac{\tau_{\delta}^{-1}p}{\tau_{\delta}^{-1}p + \Sigma_{11}/\mu^{(1)}} \quad \text{and} \quad R^2_D \to \frac{\tau_{\gamma}^{-1}p}{\tau_{\gamma}^{-1}p + \Sigma_{22}/\mu^{(1)}}.
\]
\end{pro}

\begin{proof}[Proof of \autoref{pro:rsquared}]
  As the two regressions are similar, let us focus on the $Y$ regression.
  Note that $\E[X_i]=0$. The \emph{prior-induced} variance of the signal (per observation) is
\[
\var(X_i'\delta) = \E\left[X_i'\var(\delta)X_i\right] = \tau_{\delta}^{-1} \tr(\Sigma_X)\rightarrow\tau_{\delta}^{-1}p\mu^{(1)}.
\]
The per-observation noise variance is simply the error variance for the $Y$ equation, which is $\Sigma_{11}$. The signal-noise ratio is then
\[
\SNR_Y \to \tau_{\delta}^{-1}p\frac{\mu^{(1)}}{\Sigma_{11}}.
\]

Then $R^2$ is the fraction of the total variance that is explained by the signal. For the $Y$ equation, one can write
\[
R^2 = \frac{\text{Signal Variance}}{\text{Signal Variance} + \text{Noise Variance}} 
=\frac{\SNR_Y}{\SNR_Y + 1} = \frac{\tau_{\delta}^{-1}p}{\tau_{\delta}^{-1}p + \Sigma_{11}/\mu^{(1)}}.
\]
\end{proof}

\begin{proof}[Proof of \autoref{lem:alpha-prior}]
From the properties of the Inverse-Wishart distribution, 
\begin{align*}
\alpha|\Sigma_{11\cdot2} &\sim \normal\left(\frac{\Sigma_{0,12}}{\Sigma_{0,22}}, \frac{\Sigma_{11\cdot2}}{\Sigma_{0,22}}\right)\\
\Sigma_{11\cdot2} &\sim \IG\left(\frac{\nu_0}{2}, \frac{\Sigma_{0,11\cdot2}}{2}\right).
\end{align*}
where \(\Sigma_{11\cdot2}= \Sigma_{11} - \Sigma_{12}^2/\Sigma_{22}\) and \(\Sigma_{0,11\cdot2} = \Sigma_{0,11} - \Sigma_{0,12}^2/\Sigma_{0,22}\).
Thus, the marginal distribution of \(\alpha\) is a location-scale \(t\)-distribution:
\begin{align*} 
  \alpha &\sim \frac{\Sigma_{0,12}}{\Sigma_{0,22}} + \frac{t_{\nu_0}}{\sqrt{\nu_0\Sigma_{0,22}/\Sigma_{0,11\cdot2}}}\\ 
  &= \frac{\Sigma_{0,12}}{\Sigma_{0,22}} + \frac{|\Sigma_0|^{1/2}}{\sqrt{\nu_0}\Sigma_{0,22}}t_{\nu_0}.
\end{align*}
\end{proof}

The following two propositions generalize \autoref{pro:sel-bias} from \autoref{sec:prior} to the case where $D_i$ may not be binary.
In this setting we work with the \emph{confounding bias}, defined as \(\Delta(z)=\E[Y_i|D_i=z]-\E[Y_i(z)]\), rather than selection bias but the intuition and proofs are effectively identical.

\begin{pro}[Confounding bias: na\"ive estimator] \label{pro:conf-bias-naive}
  Assume that the model and na\"ive estimator are as above. Suppose that Assumptions \ref{assump:dgp}--\ref{assump:rates} hold. The confounding bias of the na\"ive estimator is
  \[
  \sqrt{p}\Delta(z) \dto \normal\left(0,\frac{\lambda^{-1} (d^{(1)})^2}{\tau_{\gamma}^{-1}d^{(2)}}z^2\right).
  \]
  Therefore, \(\Delta(z)\pto0\), if \(p\to\infty\) as \(n\to\infty\), and thus the na\"ive estimator is dogmatic in the limit.
\end{pro}

\begin{proof}[Proof of \ref{pro:conf-bias-naive}]
  See \cite{linero_nonparametric_2023} Supplement S.2. Note that
  \[
  \Delta(z) = z\frac{\gamma'\Sigma_X\beta}{\Sigma_{22}+\gamma'\Sigma_X\gamma}.
  \]
\end{proof}

\begin{pro}[Confounding bias: BDML] \label{pro:conf-bias-BDML}
  Assume that the model and prior are as above. Suppose that Assumptions \ref{assump:dgp}--\ref{assump:rates} hold. 
  Then confounding bias of the BDML estimator is
  \[
  \Delta(z) \pto \frac{\Sigma_{12}}{\gamma'\Sigma_X\gamma+\Sigma_{22}}z,
  \]
  which is non-dogmatic in the limit.
\end{pro}

\begin{proof}[Proof of \autoref{pro:conf-bias-BDML}]
Note that the confounding bias is defined as \(\Delta(z)=\E[Y_i|D_i=z]-\E[Y_i(z)]\). For the first term, as \(\E[X_i]=0\), 
\begin{align*}
    \E[Y_i|D_i=z] &=\frac{\cov(Y_i,D_i)}{\var(D_i)} z=\frac{\cov(X_i\delta,X_i\gamma)+\cov(V_i,U_i)}{\var(X_i\gamma)+\var(V_i)} z\\ 
    &= \frac{\gamma'\Sigma_X\delta+\Sigma_{12}}{\gamma'\Sigma_X\gamma+\Sigma_{22}} z \pto \frac{\Sigma_{12}}{\gamma'\Sigma_X\gamma+\Sigma_{22}}z,
\end{align*} 
where the convergence follows from a similar argument as in Proposition \ref{pro:conf-bias-naive}. The distribution of the first term is non-degenerate as \(\Sigma\sim \IW(\nu_0,\Sigma_0)\).
The second term equals 0 as \(D\) does not directly enter into the \(Y\) equation. Combining both terms, we obtain the distribution of the confounding bias under the BDML estimator, which is non-dogmatic in the limit.
\end{proof}

\begin{proof}[Proof of \autoref{pro:post-mean-naive}]
Denote the posterior mean estimator of \(\alpha\) as \(\hat\alpha_\naive\). Denote \(p/n=r(1+o(1))\). Let 
\[
S_X = X'X/n \quad \text{and} \quad S_{\lambda} = S_X + \lambda/n\I.
\]
Recall \(\Sigma_X=Q_X\Lambda_XQ_X'\) and \(Q_X\) is the matrix of eigenvectors of \(\Sigma_X\). For a generic vector \(\theta\), let \(\theta_Q=Q_X'\theta\).
By an argument similar to that of \autoref{pro:ridgeMSE},
  \begin{align*}
    \hat\alpha_\naive&=\frac{D'(\I_n-XS_{\lambda}^{-1}X'/n)Y}{D'(\I_n-XS_{\lambda}^{-1}X'/n)D}
    = \frac{(X\gamma^*+V)'(\I_n-XS_{\lambda}^{-1}X'/n)(\alpha^*D+ X\beta^* +\varepsilon)}{(X\gamma^*+V)'(\I_n-XS_{\lambda}^{-1}X'/n)(X\gamma^*+V)}\\ 
    &= \alpha^*+\frac{(X\gamma^*+V)'(\I_n-XS_{\lambda}^{-1}X'/n)(X\beta^* +\varepsilon)}{(X\gamma^*+V)'(\I_n-XS_{\lambda}^{-1}X'/n)(X\gamma^*+V)}.
  \end{align*}  

  For the numerator, note that \(\|\delta^*\|_2\) and \(\|\gamma^*\|_2\) are bounded, and since \(\beta^*=\delta^*-\alpha^*\gamma^*\), it follows that \(\|\beta^*\|_2\) is bounded as well. Moreover, the spectral of \(\Sigma_X\) converge to a limit distribution on \((0,\infty)\). Therefore, following the argument in Lemma S.2 of \citet{linero_nonparametric_2023}, we have that 
  \[
  V'(\I_n-XS_{\lambda}^{-1}X'/n)X\beta^*/n,\; \gamma^{*\prime}X'(\I_n-XS_{\lambda}^{-1}X'/n)\varepsilon/n,\; V'(\I_n-XS_{\lambda}^{-1}X'/n)\varepsilon/n
  \]
  are all \(O_p\left(\frac 1 {\sqrt n}\right)\). There is only one remaining term
  \begin{align*}
    \gamma^{*\prime}X'(\I_n-XS_{\lambda}^{-1}X'/n)X\beta^*/n &= \frac{\lambda} n\sum_{j=1}^{p}\frac{d_j}{d_j+\lambda/n} \gamma^*_{Q,j}\beta^*_{Q,j}+O_p\left(\frac 1 {\sqrt n}\right).
  \end{align*}
 For the denominator, similarly, we have \(V'(\I_n-XS_{\lambda}^{-1}X'/n)X\gamma^*/n\) is \(O_p\left(\frac 1 {\sqrt n}\right)\), and two remaining terms
  \begin{align*}
    \gamma^{*\prime}X'(\I_n-XS_{\lambda}^{-1}X'/n)X\gamma^*/n &= \frac{\lambda} n\sum_{j=1}^{p}\frac{d_j}{d_j+\lambda/n} (\gamma_{Q,j}^*)^2+O_p\left(\frac 1 {\sqrt n}\right),\\
    V'(\I_n-XS_{\lambda}^{-1}X'/n)V/n &= \Sigma^*_{22}\left(\frac{\lambda} n\sum_{j=1}^{p}\frac{1}{d_j+\lambda/n}+1-r\right)+O_p\left(\frac 1 {\sqrt n}\right).
  \end{align*}
  All \(O_p(1/\sqrt n)\) terms are mean zero.

  Combining the numerator and denominator, we have
  \begin{align}
    \hat\alpha_\naive-\alpha^* = \frac{\frac{\lambda} n\sum_{j=1}^{p}\frac{d_j}{d_j+\lambda/n} \gamma^*_{Q,j}\beta^*_{Q,j}+O_p\left(\frac 1 {\sqrt n}\right)}{\frac{\lambda} n\sum_{j=1}^{p}\frac{d_j}{d_j+\lambda/n} (\gamma_{Q,j}^*)^2+\Sigma^*_{22}\left(\frac{\lambda} n\sum_{j=1}^{p}\frac{1}{d_j+\lambda/n}+1-r\right)+O_p\left(\frac 1 {\sqrt n}\right)}\label{eq:naive-fc}
  \end{align}
  As \(\lambda=o(n)\), and \(\|\beta^*\|_2\) and \(\|\gamma^*\|_2\) are bounded,
  \begin{align*}
    \hat\alpha_\naive-\alpha^* &= \frac 1 {\Sigma^*_{22}}\frac{\lambda} n\sum_{j=1}^{p}\frac{d_j}{d_j+\lambda/n} \gamma^*_{Q,j}\beta^*_{Q,j}+O_p\left(\frac 1 {\sqrt n}\right)\\
    &= \frac{\lambda} n\frac {\gamma^{*\prime}\beta^*} {\Sigma^*_{22}}+o\left(\frac {p} {n}\right)+O_p\left(\frac 1 {\sqrt n}\right).
  \end{align*} 
  where the \(O_p(1/\sqrt n)\) term is mean zero, so the na\"ive estimator is consistent. The last line is given by the fact that \(\lambda \asymp p\).
  \begin{align*}
    \sqrt n(\hat\alpha_\naive-\alpha^*)=\frac{\lambda} {\sqrt{n}}\frac {\gamma^{*\prime}\beta^*} {\Sigma^*_{22}}+o\left(\frac {p} {\sqrt{n}}\right)+O_p\left(1\right).
  \end{align*}
As \(\lambda\asymp p\), if \(\lim_{n\to\infty}p^2/n=\infty\) and 
  \(\gamma^{*\prime}\beta^*\neq0\),  the na\"ive estimator is not \(\sqrt n\)-consistent.
\end{proof}

\begin{proof}[Proof of \autoref{pro:marginal-post-Sigma}]
Now, let's derive the (approximate) marginal posterior distribution of \(\Sigma\) integrating out \(B\). 
We begin with the joint posterior of \(B\) and \(\Sigma\):
\begin{align*}
p(B, \Sigma | X, W) &\propto |\Sigma|^{-(n + \nu_0 + 3)/2} \exp\left(-\frac{1}{2}\tr(\Sigma_0\Sigma^{-1})\right) \\
&\quad \times \exp\left(-\frac{1}{2}\tr[(W - XB)'(W - XB)\Sigma^{-1}]\right) \\
&\quad \times \exp\left(-\frac{1}{2}\tvec(B)' V_0^{-1}\tvec(B) \right).
\end{align*}
Integrating out \(B\), we get:
\begin{align*}
p(\Sigma | X, W) &\propto |\Sigma|^{-(n + \nu_0 + 3)/2} |V_n|^{1/2} \exp\left(-\frac{1}{2}\tr\left((\Sigma_0+WW')\Sigma^{-1}\right)\right) \\
&\quad \times \exp\left(\frac{1}{2}\tvec(B_n)'V_n^{-1}\tvec(B_n)\right). 
\end{align*}
Based on Assumption \ref{assump:rates}, \(\tau_{\delta},\tau_{\gamma} = o(n)\). Then, for large \(n\), we can approximate \(V_n\):
\begin{align}
  V_n&= (\Sigma^{-1} \otimes X'X + V_0^{-1})^{-1} =\Sigma \otimes (X'X)^{-1}(1+o_p(1)), \label{eq:vn-0-order}\\
V_n &= \Sigma \otimes (X'X)^{-1} - (\Sigma^{-1} \otimes X'X)^{-1}V_0^{-1}(\Sigma^{-1} \otimes X'X)^{-1}(1+o_p(1)) \label{eq:vn-1st-order} 
\end{align}
where \eqref{eq:vn-0-order} and \eqref{eq:vn-1st-order} give the zeroth and first order approximations, respectively.

First, we approximate \(|V_n|^{1/2}\) using the first order approximation \eqref{eq:vn-1st-order}:
\begin{align*}
|V_n|^{1/2} &= |\Sigma \otimes (X'X)^{-1}|^{1/2}\left|I - V_0^{-1}(\Sigma \otimes (X'X)^{-1})\right|^{1/2}(1+o_p(1)) \nonumber \\
&= |\Sigma|^{p/2} |(X'X)^{-1}| \exp\left(-\frac{1}{2}\tr(V_0^{-1}(\Sigma \otimes (X'X)^{-1}))\right)(1+o_p(1))\\
&= |\Sigma|^{p/2} |(X'X)^{-1}| \exp\left(-\frac{1}{2} (\tau_{\delta}\Sigma_{11} + \tau_{\gamma}\Sigma_{22}) \tr((X'X)^{-1})\right)(1+o_p(1)),
\end{align*}
where the second line is given by the fact that \(|I-A|\approx1-\tr(A)\approx\exp(-\tr(A))\) for small \(A\). Note that 
\begin{align}
  (\tau_{\delta}\Sigma_{11} + \tau_{\gamma}\Sigma_{22}) \tr((X'X)^{-1})
  &= \tr\left(\begin{bmatrix}
    \tau_{\delta}\tr((X'X)^{-1}) & 0 \\ 0 & \tau_{\gamma}\tr((X'X)^{-1})
  \end{bmatrix}\Sigma\right). \label{eq:tr-sigma1}
\end{align}

Second, note that
\begin{align*}
\tvec(B_n) &= \left(I-V_nV_0^{-1}\right)\tvec(\hat{B}). 
\end{align*}
Now, let's expand \(\tvec(B_n)'V_n^{-1}\tvec(B_n)\):
\begin{align*}
&\tvec(B_n)'V_n^{-1}\tvec(B_n) \\
&=\tvec(\hat{B})'\left(I-V_nV_0^{-1}\right)' V_n^{-1} \left(I-V_nV_0^{-1}\right)\tvec(\hat{B}) \\
&=\tvec(\hat{B})'V_n^{-1}\tvec(\hat{B})- 2\tvec(\hat{B})'V_0^{-1}\tvec(\hat{B})  +\tvec(\hat{B})'V_0^{-1}V_nV_0^{-1}\tvec(\hat{B})\\
&=\tvec(\hat{B})'( \Sigma^{-1} \otimes X'X)\tvec(\hat{B})- \tvec(\hat{B})'V_0^{-1}\tvec(\hat{B}) \\ 
&\quad +\tvec(\hat{B})'V_0^{-1}(\Sigma \otimes (X'X)^{-1})V_0^{-1}\tvec(\hat{B})(1+o_p(1)),
\end{align*}
where the last line is by plugging in \(V_n^{-1}=\Sigma^{-1} \otimes X'X+V_0^{-1}\) and
the zeroth order approximation of \(V_n\) in \eqref{eq:vn-0-order}. The first term \(\tvec(\hat{B})'(\Sigma^{-1} \otimes X'X)\tvec(\hat{B})\) will be absorbed by the OLS part in \(\tilde{\Sigma}_n\). The second term \(\tvec(\hat{B})'V_0^{-1}\tvec(\hat{B})\) does not involve \(\Sigma\). Let \(\hat B_V=[\tau_{\delta}\hat\delta,\tau_{\gamma}\hat\gamma]\). Then, \(\tvec(\hat{B}_V)=V_0^{-1}\tvec(\hat{B})\), and the third term 
\begin{align}
\tvec(\hat{B})'V_0^{-1}(\Sigma \otimes (X'X)^{-1})V_0^{-1}\tvec(\hat{B}) &=\tvec(\hat{B}_V)'(\Sigma \otimes (X'X)^{-1})\tvec(\hat{B}_V)\label{eq:tr-sigma2}\\
&=\tr\left(\hat B_V'(X'X)^{-1}\hat B_V\Sigma\right),\notag
\end{align}
where 
\begin{align}
\hat B_V'(X'X)^{-1}\hat B_V&=\begin{bmatrix}
  \tau_{\delta}\hat\delta' \\ \tau_{\gamma}\hat\gamma'
\end{bmatrix}(X'X)^{-1}\begin{bmatrix}
  \tau_{\delta}\hat\delta & \tau_{\gamma}\hat\gamma
\end{bmatrix}\label{eq:tilde-C2}\\
&=\begin{bmatrix}
  \tau_{\delta}^2\hat\delta'(X'X)^{-1}\hat\delta & \tau_{\delta}\tau_{\gamma}\hat\delta'(X'X)^{-1}\hat\gamma \\
  \tau_{\delta}\tau_{\gamma}\hat\gamma'(X'X)^{-1}\hat\delta & \tau_{\gamma}^2\hat\gamma'(X'X)^{-1}\hat\gamma
\end{bmatrix}.\notag
\end{align} 
Combining \eqref{eq:tr-sigma1} and \eqref{eq:tr-sigma2}, we obtain the tilting part in \(\tilde C_n\).

\end{proof}

\begin{proof}[Proof of \autoref{pro:post-BDML}]
  First,  we approximate the posterior mode (and mean) of \(\Sigma\) by the fixed-point iteration solution to the Riccati equation. From the FOC, we have the matrix quadratic equation
    \[
    \Sigma\tilde C_n\Sigma - (\tilde\nu_n + 3)\Sigma + \tilde{\Sigma}_n = 0.
    \]
    Rearranging into fixed-point form
    \[
    \Sigma = \frac{\tilde{\Sigma}_n}{\tilde\nu_n + 3} + \frac{\Sigma\tilde C_n\Sigma}{\tilde\nu_n + 3}.
    \]
    Taking the zeroth-order approximation \(\Sigma^{(0)} = \tilde{\Sigma}_n/(\tilde\nu_n + 3)\) and performing one Picard iteration, we have
    \begin{align*}
    \Sigma^{(1)} &= \frac{\tilde{\Sigma}_n}{\tilde\nu_n + 3} + \frac{1}{\tilde\nu_n + 3}\left(\frac{\tilde{\Sigma}_n}{\tilde\nu_n + 3}\right)\tilde C_n\left(\frac{\tilde{\Sigma}_n}{\tilde\nu_n + 3}\right)\\
    &= \frac{\tilde{\Sigma}_n}{n} + \frac{\tilde{\Sigma}_n\tilde C_n\tilde{\Sigma}_n}{n^3}(1+o_p(1)),
    \end{align*}
where the last line is by the fact that \(\tilde\nu_n + 3 \sim n\).

  Note that
  \begin{align*}
    \frac{\tilde{\Sigma}_n}{n} &= \frac{\Sigma_0}{n}+\frac{(W - X\hat{B})'(W - X\hat{B})}{n}\\ 
    &= \frac{\Sigma_0}{n}+\Sigma^*-\frac p n \Sigma^*+O_p\left(\frac 1 {\sqrt n}\right),
    \end{align*}
    where the \(O_p\left(\frac 1 {\sqrt n}\right)\) term is mean zero.
    The first line is by the definition of \(\tilde{\Sigma}_n\). The second line is by the property of sample variance covariance matrix of the OLS residues, i.e., \(\E\left[{(W - X\hat{B})'(W - X\hat{B})}/{n-p}\right]= \Sigma^*,\) given \(p<n\) and \(p=o(n)\) in Assumptions \ref{assump:controls} and \ref{assump:rates}.
Under Assumptions \ref{assump:dgp}--\ref{assump:rates}, we can also approximate \(\tilde C_n\) as
  \begin{align*}
    \tilde C_n &= C^*(1+o_p(1)),\text{ where }C^*\asymp p^2/n.
    \end{align*}
    Then, the posterior mode and mean of \(\Sigma\) up to the order of \(1/n\) and \((p/n)^2\) is\footnote{Here we consider \((p/n)^2\) as the \(p/n\) order terms will be cancelled out in the approximation for \(\hat\alpha_\BDML\).}
\begin{align*}
  \E[\Sigma|X,W] &=\Sigma^*+\frac{1}{n}\Sigma_0-\frac p n \Sigma^*+\frac 1 n\Sigma^*C^*\Sigma^*+o_p\left(\frac 1 n\right)+o_p\left(\left(\frac p n\right)^2\right)+O_p\left(\frac 1 {\sqrt n}\right),
\end{align*}
where the \(O_p\left(\frac 1 {\sqrt n}\right)\) term is mean zero.

Also note that the posterior variance of \(\Sigma\) is of order \(O_p\left(\frac 1 n\right)\), and \(\Sigma^*_{22}>0\). Then, the posterior mean of \(\alpha\) is approximately
\begin{align*}
  \hat\alpha_\BDML=\E[\alpha|X,W] & =\frac{\Sigma^*_{12}+[\Sigma_{0,12}-p\Sigma^*_{12}+(\Sigma^*C^*\Sigma^*)_{12}]/n}{\Sigma^*_{22}+[\Sigma_{0,22}-p\Sigma^*_{22}+(\Sigma^*C^*\Sigma^*)_{22}]/n}\\ &\quad+o_p\left(\frac 1 n\right)+o_p\left(\left(\frac p n\right)^2\right)+O_p\left(\frac 1 {\sqrt n}\right),
     \end{align*}
     where the \(O_p\left(\frac 1 {\sqrt n}\right)\) term is mean zero. Then, we approximate \(\hat\alpha_\BDML\) up to the order of \(1/n\) and \((p/n)^2\). Let
     \begin{align*}
      \Delta &= [\Sigma_0- p \Sigma^*+\Sigma^*C^*\Sigma^*]/n,
     \end{align*}
     Then,
     \begin{align*}
      &\frac{\Sigma^*_{12}+[\Sigma_{0,12}-p\Sigma^*_{12}+(\Sigma^*C^*\Sigma^*)_{12}]/n}{\Sigma^*_{22}+[\Sigma_{0,22}-p\Sigma^*_{22}+(\Sigma^*C^*\Sigma^*)_{22}]/n} = \frac{\Sigma^*_{12}+\Delta_{12}}{\Sigma^*_{22}+\Delta_{22}} \\&= \frac{\Sigma^*_{12}}{\Sigma^*_{22}}\left[1+\left(\frac{\Delta_{12}}{\Sigma^*_{12}}-\frac{\Delta_{22}}{\Sigma^*_{22}}\right)\left(1-\frac{\Delta_{22}}{\Sigma^*_{22}}\right)\right]
      \end{align*}
      Note that \begin{align*}\frac{\Delta_{12}}{\Sigma^*_{12}}-\frac{\Delta_{22}}{\Sigma^*_{22}} &= \frac{\Sigma_{0,12}-p\Sigma^*_{12}+(\Sigma^*C^*\Sigma^*)_{12}}{n\Sigma^*_{12}}-\frac{\Sigma_{0,22}-p\Sigma^*_{22}+(\Sigma^*C^*\Sigma^*)_{22}}{n\Sigma^*_{12}}\\ 
      &= \frac 1 n\left(\frac{\Sigma_{0,12}+(\Sigma^*C^*\Sigma^*)_{12}}{\Sigma^*_{12}}-\frac{\Sigma_{0,22}+(\Sigma^*C^*\Sigma^*)_{22}}{\Sigma^*_{12}}\right)\\ &=O\left(\frac 1 n\right)+O\left(\left(\frac p n\right)^2\right),\end{align*} where we cancel out the \(p/n\) order terms.
     As \(\alpha^*=\Sigma^*_{12}/\Sigma^*_{22}\), we have
      \begin{align*}
        \hat\alpha_\BDML &= \alpha^*+\frac {\alpha^*} n\left(\frac{\Sigma_{0,12}+(\Sigma^*C^*\Sigma^*)_{12}}{\Sigma^*_{12}}-\frac{\Sigma_{0,22}+(\Sigma^*C^*\Sigma^*)_{22}}{\Sigma^*_{12}}\right)\\ &\quad+o_p\left(\frac 1 n\right)+o_p\left(\left(\frac p n\right)^2\right)+O_p\left(\frac 1 {\sqrt n}\right),
         \end{align*}
         where we cancel out common terms in the second equality.
    Then, \(\hat\alpha_\BDML\pto\alpha^*\), so the BDML estimator is consistent. 
    
    Moreover,
\begin{align*}
   \sqrt n(\hat\alpha_\BDML-\alpha^*) &=\frac {\alpha^*} {\sqrt{n}}\left(\frac{\Sigma_{0,12}+(\Sigma^*C^*\Sigma^*)_{12}}{\Sigma^*_{12}}-\frac{\Sigma_{0,22}+(\Sigma^*C^*\Sigma^*)_{22}}{\Sigma^*_{22}}\right)\\ &\quad+o_p\left(\frac 1 {\sqrt n}\right)+o_p\left(\frac {p^2}{n^{3/2}}\right)+O_p\left(1\right),
\end{align*}
where the \(O_p(1)\) term is mean zero. Then, if \(p=o(n^{3/4})\), \begin{align*}
  \frac {\alpha^*} {\sqrt{n}}\left(\frac{\Sigma_{0,12}+(\Sigma^*C^*\Sigma^*)_{12}}{\Sigma^*_{12}}-\frac{\Sigma_{0,22}+(\Sigma^*C^*\Sigma^*)_{22}}{\Sigma^*_{22}}\right)=O\left(\frac 1 {\sqrt n}\right)+O\left(\frac {p^2}{n^{3/2}}\right)=o_p(1),
\end{align*}
then \(\hat\alpha_\BDML\) is \(\sqrt n\)-consistent.
\end{proof}

\begin{proof}[Proof of \autoref{pro:var-alpha}]
  Both \(\hat\alpha_\BDML\) and \(\hat\alpha_\FDML\) can be approximated as \(\hat\alpha \approx \bar\Sigma_{12}/\bar\Sigma_{22}\) where \(\bar\Sigma = (W - X\hat{B})'(W - X\hat{B})/n\). Applying the delta method to this ratio, we obtain the leading term of their asymptotic variance
  \begin{align*}
  &n\var\left[\frac{\Sigma^*_{12}\bar\Sigma_{22}-\Sigma^*_{22}\bar\Sigma_{12}}{(\Sigma^*_{22})^2}\right]\\
  &=\frac n {(\Sigma^*_{22})^4}\left[(\Sigma^*_{12})^2\var(\bar\Sigma_{22})+(\Sigma^*_{22})^2\var(\bar\Sigma_{12})-2\Sigma^*_{12}\Sigma^*_{22}\cov(\bar\Sigma_{12},\bar\Sigma_{22})\right](1+o(1)).
  \end{align*}
Note that \(n\bar\Sigma\sim \mathcal{W}_2(n-p, \Sigma^*)\), then we have
  \[
    \cov\left(n\bar\Sigma_{ij},n\bar\Sigma_{kl}\right)=(n-p)\bigl(\Sigma^*_{ik}\Sigma^*_{jl}+\Sigma^*_{il}\Sigma^*_{jk}\bigr).
  \]
 As $p=o(n)$, we have
  \begin{align*}
    n\var(\bar\Sigma_{22}) &\to 2(\Sigma^*_{22})^{2},\\
    n\var(\bar\Sigma_{12}) &\to \Sigma^*_{11}\Sigma^*_{22}+(\Sigma^*_{12})^{2},\\
    n\cov(\bar\Sigma_{12},\bar\Sigma_{22}) &\to 2\Sigma^*_{12}\Sigma^*_{22}.
  \end{align*}
  Results follow from plugging in the above expressions.

  For \(\hat\alpha_\naive\), the leading term of the asymptotic variance is given by
  \begin{align*}
    n\var\left[\frac{V'(\I_n-XS_{\lambda}^{-1}X'/n)\varepsilon/n}{\Sigma^*_{22}}\right]=\frac {(\sigma^*_{\varepsilon})^2\Sigma^*_{22}}{(\Sigma^*_{22})^2}(1+o(1))=\frac {(\sigma^*_{\varepsilon})^2}{\Sigma^*_{22}}(1+o(1)),
  \end{align*}
  where the first equality is given by the fact that \begin{align*}
    n\var[V'(\I_n-XS_{\lambda}^{-1}X'/n)\varepsilon/n]&\to \var\left(\frac 1 n\sum_{i=1}^nV_i'\varepsilon_i\right)=(\sigma^*_{\varepsilon})^2\Sigma^*_{22}.
  \end{align*}
  Our structural model imposes \(\alpha^*=\Sigma^*_{12}/\Sigma^*_{22}\) and
\((\sigma^*_{\varepsilon})^2=\Sigma^*_{11}-(\alpha^{*})^2\Sigma^*_{22}=\Sigma^*_{11}-(\Sigma^*_{12})^{2}/\Sigma^*_{22}\), so the asymptotic variance expressions coincide. 
\end{proof}

\begin{proof}[Proof of \autoref{lem:matrix-bernstein}]
  See Theorems 2.8.1 and 5.4.1 in Vershynin (2018).
\end{proof}
\begin{proof}[Proof of \autoref{pro:bvm-known-sigma}]
  The proof follows from Walker (2024) by verifying his assumptions in our high dimensional regression setup.

  \medskip
\noindent\textbf{Part 1: Assumptions 1 and 2 in Walker (2024).} Assumption 1 in Walker (2024) is satisfied by our model setup and Assumptions \ref{assump:dgp}--\ref{assump:rates}. Our Assumption \ref{assump:controls} also implies that $\E[X_iX_i']$ is non-singular, which ensures identification.

  Assumption 2 in Walker (2024) is satisfied by our prior together with the additional assumption that $\sigma_{\varepsilon}^{2*}$ and $\sigma_{V}^{2*}$ are known, and the prior for $\alpha$ is continuous and positive over a neighborhood of $\alpha^*$.

  \medskip
  \noindent\textbf{Part 2: Assumptions 3 and 4 in Walker (2024).} For notation simplicity, let $\theta_1 = \delta$, $\theta_2 = \gamma$, and $\theta = (\theta_1',\theta_2')'$.
  By Corollary \ref{cor:eigen-bounds}, since $X_i$ is sub-gaussian and i.i.d., and the eigenvalues of $\Sigma_X$ is uniformly bounded by $[\underline\lambda_X,\bar\lambda_X]$ with $0<\underline\lambda_X\le \bar\lambda_X<\infty$. For event \[\mathcal K_n =\left\{X\in\mathbb R^{n\times p}:\,\lambda_{\min}\left(\frac{X'X} n\right)\ge \frac{\underline\lambda_X}2\text{ and }\lambda_{\max}\left(\frac{X'X} n\right)\le 2\bar\lambda_X\right\},\] we have that $P(\mathcal K_n)\pto 1$ as $n\to\infty$.

  Define $\mathcal M_n = \mathcal M$, then Walker (2024)'s Assumption 3.1 is automatically satisfied. 

  For Assumption 3.2 in Walker (2024), the posterior contracts at rate
  \[
  \|\theta-\theta^*\|_2 = O_p\left(\sqrt{\frac{p}{n}}\right)
  \] uniformly over $X\in\mathcal K_n$.
  By the equivalence of norms, in event $\mathcal K_n$, we have
  \begin{align}
    \frac{\underline\lambda_X}2\|\theta-\theta^*\|_2^2 \le \|m-m^*\|_{n,2}^2 \le 2\bar\lambda_X\|\theta-\theta^*\|_2^2,\label{eq:equi-norm}
  \end{align}
  posterior contraction in \(\ell_2\)-norm implies contraction in the \(L_2(P_n)\) norm for the functions. Therefore, let $\nu_n=\sqrt{p/n}$, for some constants $D_m,D_{\theta}>0$,
  \begin{align*}
  &\E_{P^*}\left[P\left(\mathcal M_n \cap B_{\mathcal M,n,2}(m^*,D_m \nu_n)^c \mid X,W\right) \right] \\ 
  &\le \E_{P^*}\left[P\left(\|\theta-\theta^*\|_2>D_{\theta}\nu_n\mid X,W,\mathcal K_n\right)\right]+P(\mathcal K_n^c)\to 0.
  \end{align*}

  For Assumption 4, as $p=o\left(\sqrt{n}\right)$, $\nu_n=\sqrt{p/n}=o(n^{-1/4})$. 
  
  \medskip
  \noindent\textbf{Part 3: Assumption 5 in Walker (2024).} 
  Assumption 5.1 considers the empirical process
  \[
  G_n^{(1)}(m) = \sqrti V_i \left[ m_1(X_i)- m_1^*(X_i)\right]
  \]
  be uniformly small over the set $B_{\mathcal M,n,2}(m^*,D_m \nu_n)$.

  In our linear model, write
  \(
  m_1(X_i)- m_1^*(X_i)=X_i' (\delta-\delta^*).
  \)
  Thus, define the function class
  \[
  \mathcal{F}_{\nu} = \left\{ f(X) = X' (\delta-\delta^*) : \|\delta-\delta^*\|_2 \le \nu \right\}.
  \]
  The empirical \(L_2\) norm is defined by
  \[
  \|f\|_{n,2} = \left[ \avgi \left(X_i' (\delta-\delta^*)\right)^2\right]^{1/2}.
  \]
  
  Under event $\mathcal K_n$, by equivalence of norm \eqref{eq:equi-norm}, the metric on \(\mathcal{F}_{\nu}\) is equivalent to the Euclidean norm on \(\delta-\delta^*\).
  Then, the covering number of the ball \(B_{\delta,D_{\theta}\nu_n}=\{\delta: \|\delta-\delta^*\|_2 \le D_{\theta}\nu_n\}\) in \(\mathbb{R}^{p}\) satisfies
  \[
  N\left(\epsilon, B_{\delta,D_{\theta}\nu_n}, \|\cdot\|_2\right)
  \le \left(\frac{3D_{\theta}\nu_n}{\epsilon}\right)^{p},
  \]
  for $\epsilon<D_{\theta}\nu_n$.
  Thus, the covering number for \(\mathcal{F}_{D_m\nu_n}\) with respect to \(\|\cdot\|_{n,2}\) follows
  \[
  \log N\left(\epsilon, \mathcal{F}_{D_m\nu_n}, \|\cdot\|_{n,2}\right)
  \lesssim p \log\frac{3D_m\nu_n}{\epsilon}.
  \]
  Dudley’s inequality states that there is a constant \(K>0\) such that
  \begin{align*}
  \mathbb{E}\left[\sup_{f\in\mathcal{F}_{D_m\nu_n}} \left| \sqrti V_i f(X_i) \right|\cdot 1_{\mathcal K_n}\right]
  & \le K \int_0^{\infty} \sqrt{ \log N\left(\epsilon, \mathcal{F}_{D_m\nu_n}, \|\cdot\|_{n,2}\right)}\, d\epsilon\\ 
  &=K \int_0^{3D_m\nu_n} \sqrt{ \log N\left(\epsilon, \mathcal{F}_{D_m\nu_n}, \|\cdot\|_{n,2}\right)}\, d\epsilon,
  \end{align*}
  where the second line is by the fact that the covering number is one for \(\epsilon>3D_m\nu_n\).
  Then the entropy integral becomes
  \[
  \int_0^{3D_m\nu_n} \sqrt{p \log\frac{3D_m\nu_n}{\epsilon}}\, d\epsilon=O(\nu_n\sqrt p)=o(1).
  \]
  The first equality follows from that the integral \(\int_0^{1} \sqrt{\log(1/u)}\, du\) is finite, and the second equality is by $\nu_n = O(\sqrt{p/n})$ and $p=\left(\sqrt{n}\right)$.

  Under event $\mathcal K_n^c$, we do not have the norm equivalence. However, for any \(f\in\mathcal{F}_{D_m \nu_n}\), we have that
  \begin{align*}
  |G_n^{(1)}(m)| &\le \frac{1}{\sqrt{n}} \sum_{i=1}^n |V_i|\,\|X_i\|_2D_{\theta} \nu_n\\
  &\le D_{\theta} \nu_n\sqrt{n}\cdot\max_{1\le i\le n} |V_i|\cdot\max_{1\le i\le n}\|X_i\|_2,
  \end{align*}
  Under our sub-Gaussian assumptions for \(V_i\) and \(X_i\), Exercise 2.5.10 in Vershynin (2018) yield
  \[
    \max_{1\le i\le n} |V_i| = O_p(\sqrt{\log n})\text{ and } \max_{1\le i\le n} \|X_i\|_2 = O_p(\sqrt{\log n}).
  \]
  Hence, on \(\mathcal K_n^c\),
  \[
  \sup_{f\in\mathcal{F}_{D_m \nu_n}} |G_n^{(1)}(m)| \le D_{\theta} \nu_n\sqrt{n}\cdot O_p(\log n).
  \]
  Then,
  \begin{align*}
  \E\left[\sup_{f\in\mathcal{F}_{D_m \nu_n}} |G_n^{(1)}(m)|1_{\mathcal K_n^c}\right] &\le D_{\theta} \nu_n\sqrt{n}\cdot O(\log n)\cdot P(\mathcal K_n^c)\\ 
  &= D_{\theta} \nu_n\sqrt{n}\cdot O(\log n)\cdot O(p^{-1})\\ 
  & = O\left(\frac{\log n}{\sqrt{p}}\right),
  \end{align*}
  where the second line follows from \(P(\mathcal K_n^c) = O(p^{-1})\) by Corollary \ref{cor:eigen-bounds}, and the last line is by \(\nu_n = \sqrt{p/n}\).
  
  Thus, the overall bound on the supremum is given by
  \[
  \E\left[\sup_{f\in\mathcal{F}_{D_m \nu_n}} |G_n^{(1)}(m)|\right] \le O\left(\frac{p}{\sqrt{n}}\right) + O\left(\frac{\log n}{\sqrt{p}}\right).
  \]
  Hence, provided that
  \(p = o(\sqrt{n})\) and $\log n = o(p^{1/2})$, the overall supremum tends to zero. 

  This completes the verification of Walker’s Assumption 5.1 in the \(L_2(P_n)\) metric. Assumption 5.2 can be verified in a similar manner.

\end{proof}

\begin{proof}[Proof of \autoref{cor:eigen-bounds}]
  As \(X_i\) is sub-gaussian, 
\(
S_i = X_iX_i' - \Sigma_X
\)
is sub-exponential, and its Orlicz \(\psi_1\) norm is bounded, i.e., there exists a constant \(M_S>0\) such that
\(
\|S_i\|_{\psi_1} \le M_S,
\)
for all \(i\). By Lemma \ref{lem:matrix-bernstein}, for any $\epsilon \ge 0$,
\[
  P\left\{ \left\|\frac{X'X}n-\Sigma_X\right\| \ge \epsilon \right\}=P\left\{ \left\|\sumi S_i\right\| \ge tn \right\}
\le 2p \cdot \exp\left( - cn\cdot\min\left( \frac{\epsilon^2}{M_S^2},\,\frac{\epsilon}{M_S} \right)\right).
\]

Typically, $\epsilon$ is small, so the mininum is achieved by the quadratic term. We choose $\epsilon=M_S\sqrt{\frac{2\log p}{\tilde c n}}$, then
\[
  P\left\{ \left\|\frac{X'X}n-\Sigma_X\right\| \ge \epsilon \right\}
\le 2p \cdot \exp(-2\log p)=\frac 2 p=O(p^{-1})\to0,
\]
as $n\to\infty$. 

Define event $\mathcal E_n =\left\{X\in\mathbb R^{n\times p}:\,\left\|\frac{X'X}n-\Sigma_X\right\| < \epsilon\right\}$. For any $X\in\mathcal E_n$,
 Weyl's inequality implies that for every eigenvalue in decreasing order,
\(
\left|\lambda_j\left(\frac{X'X} n\right)- \lambda_{j}(\Sigma_X)\right|<\epsilon.\)
For large $n$, \(\epsilon<\underline\lambda_X/2<\bar\lambda_X\), so $\lambda_{\min}\left(\frac{X'X} n\right)\ge \underline\lambda_X/2$ and $\lambda_{\max}\left(\frac{X'X} n\right)\le 2\bar\lambda_X$, and thus $X\in\mathcal K_n$. Therefore, 
\begin{align*}
  P(\mathcal K_n)&\ge P(\mathcal E_n)\pto 1,\\ 
  P(\mathcal K_n^c)&\le P(\mathcal E_n^c)=O(p^{-1}),
\end{align*} as $n\to\infty$.
\end{proof}


\end{document}